\documentclass[11pt]{article}
\usepackage[margin=0.75in]{geometry}
\usepackage{times}

\usepackage{hyperref}
\usepackage[hyphenbreaks]{breakurl}
\usepackage{amsthm}
\usepackage{amsmath,amsfonts}

\usepackage{amssymb}
\usepackage{url}
\usepackage{thmtools,thm-restate}
\usepackage{makecell}
\newtheorem{example}{Example}
\newtheorem{definition}{Definition}
\newtheorem{theorem}{Theorem}[section]
\newtheorem{lemma}[theorem]{Lemma}

\usepackage{mathtools}

\usepackage[ruled,linesnumbered,vlined]{algorithm2e}
\SetKw{KwOr}{or}
\SetKw{KwAnd}{and}

\usepackage{bm}

\makeatletter
\newsavebox{\@brx}
\newcommand{\llangle}[1][]{\savebox{\@brx}{\(\m@th{#1\langle}\)}%
	\mathopen{\copy\@brx\kern-0.5\wd\@brx\usebox{\@brx}}}
\newcommand{\rrangle}[1][]{\savebox{\@brx}{\(\m@th{#1\rangle}\)}%
	\mathclose{\copy\@brx\kern-0.5\wd\@brx\usebox{\@brx}}}
\makeatother

\usepackage{listings}
\usepackage{color}
\definecolor{dkgreen}{rgb}{0,0.6,0}
\definecolor{gray}{rgb}{0.5,0.5,0.5}
\definecolor{mauve}{rgb}{0.58,0,0.82}
\lstset{language=SQL,
	belowskip=3mm,
	breakatwhitespace=true,
	classoffset=0,
	commentstyle=\color{green},
	keywordstyle=\color{purple},
	stringstyle=\color{purple},
	basicstyle=\small\ttfamily,
	breaklines=true,
	columns=fullflexible,
	frame=single,
	framexleftmargin=0.25em,
	frameshape={}{}{}{}, 
	numbers=none, 
	numberstyle=\tiny\color{gray},
	showstringspaces=false,
	tabsize=3,
	xleftmargin =1em
}

\newcommand{\TOP}{\textsf{TOP}}

\newcommand{\E}{\mathbb{E}}

\newcommand{\prefix}[1]{\texttt{prefix\_{#1}}}

\usepackage{caption}
\usepackage{subcaption}
\usepackage{multirow}

\SetKwFor{ParFor}{for}{do in parallel}{end}

\usepackage{tikz}

\newcommand{\rank}[1]{\hat{#1}}
\newcommand{\bmrank}[1]{\rank{\bm{#1}}}

\begin{document}

\title{Secure Query Processing with Linear Complexity}
\author{
Qiyao Luo \thanks{Hong Kong University of Science and Technology. {\tt qluoak@cse.ust.hk}.}
\and
Yilei Wang \thanks{Alibaba Cloud. {\tt fengmi.wyl@alibaba-inc.com}}
\and
Wei Dong \thanks{Carnegie Mellon University. {\tt wdong2@andrew.cmu.edu}.}
\and
Ke Yi \thanks{Hong Kong University of Science and Technology. {\tt yike@cse.ust.hk}.}
}
\date{}

\maketitle

\begin{abstract}
We present LINQ, the first join protocol with linear complexity (in both running time and communication) under the secure multi-party computation model (MPC). It can also be extended to support all free-connex queries, a large class of select-join-aggregate queries, still with linear complexity.  This matches the plaintext result for the query processing problem, as free-connex queries are the largest class of queries known to be solvable in linear time in plaintext. We have then built a query processing system based on LINQ, and the experimental results show that LINQ significantly outperforms the state of the art.  For example, it can finish a query on three relations with an output size of 1 million tuples in around 100s in the LAN setting, while existing protocols that support the query cannot finish in an hour. Thus LINQ brings MPC query processing closer to practicality.
\end{abstract}

\section{Introduction}
With significant improvements in both computing infrastructure and protocol designs, \textit{secure multi-party computation (MPC)} is now transforming from a theoretical concept to a practical technique for breaking the barrier between information silos.  It holds the appealing guarantee that a query can be evaluated on the private data of multiple parties without revealing anything beyond the query result. This meets the needs of enterprises who want to perform collaborative analysis without sharing their private data.
\begin{example}
\label{intro_exa}
Consider a scenario where an insurance company, a hospital, and a bank would like to collaboratively analyze users' behaviour based on their insurance plans, medical history, and deposits. These datasets are usually stored in relational databases.  As an example, assume the following simplified schema: Insurance Plans {\tt I(id, insurance\_pay, start\_year, end\_year)}, Medical Records {\tt M(id, disease, treatment\_cost, year)}, and Deposits {\tt D(id, deposit)}. Then the  select-join-aggregate query below conducts a typical collaborative analysis, which  finds the average amount of medical payment of patients whose deposits fall inside a given range $[10000, 30000]$,  classified by diseases:
\begin{center}
\begin{tabular}{c}
\begin{lstlisting}
SELECT disease, AVG(M.treatment_cost - I.insurance_pay)
  FROM I, M, D
 WHERE I.id = M.id AND M.id = D.id 
   AND D.deposit >= 10000 AND D.deposit <= 30000
GROUP BY disease
\end{lstlisting}
\end{tabular}
\end{center}

Since this query is a free-connex query, it can be executed in linear time in plaintext, and all modern database systems can run such queries very efficiently.  However, they are of no help when the three relations are owned by different parties who do not trust each other.  This scenario thus calls for an efficient MPC protocol that can compute the query while protecting the privacy of the three data owners.
\qed
\end{example}

The major technical challenge in executing the query above is the join operator.
A brute-force method for computing a join of $k$ relations, each having $n$ tuples, is known as \textit{nested-loop join}.  It has a cost of $O(n^k)$, which is actually already worst-case optimal as the output size of the join, denoted as $m$, can be as large $n^k$ in the worst case.  However, worst-case optimality is not very meaningful for this problem, as $m \ll n^k$ on typical inputs.  Thus, the database literature is more interested in an $O(n+m)$ running time.  Such a running time is often called \textit{linear}, which is clearly the best one can hope for.  One landmark result in query processing is that linear time is achievable if and only if the query is \textit{free-connex} (defined formally in Section \ref{sec:freeconnex}).  The upper bound uses hash-joins and the Yannakakis algorithm \cite{yannakakis,ajar,faqpaper}, and the lower bound is based on some standard complexity assumptions \cite{10.1007/978-3-540-74915-8_18}.

Generic MPC protocols \cite{mpcbook} require the computation to be expressed as a circuit.  While nested-loop join can be trivially circuitized, its $O(n^k)$ size makes it unaffordable.  There is also a circuit version of the Yannakakis algorithm, but its size is $O((n+m) \log^4(n+m))$ \cite{wang2022query}, which is still impractical.  Therefore, recent efforts have focused on designing custom protocols, often taking advantage of a particular MPC variant, such as the two-party model (2PC) \cite{secyan} or the three-party honest majority model (3PC) \cite{secretsharedjoins,scape,secrecy}.
However, linear complexity (in either running time or communication) has yet to be achieved (see Table \ref{tab:comp} for a summary of prior results), leaving a gap from plaintext query processing. 

\subsection{Our Contributions}
We close the gap in this paper by presenting LINQ, the first \underline{LIN}ear complexity \underline{Q}uery processing protocol under the 3PC model with linear complexity in both computation and communication.  More importantly, LINQ supports all free-connex queries with linear complexity, thus matching the plaintext result on the largest class of queries.  The intriguing problem of whether linear complexity is achievable under 2PC is left open.  Below we describe some technical highlights behind our result.

We start by re-visiting the two-way join problem.
In plaintext, the hash-join algorithm computes a join in $O(n+m)$ time.  However, it is difficult to implement in MPC; intuitively, this is because hash join requires a hash table that associates multiple values to a join key (e.g., in Example \ref{intro_exa}, multiple tuples in relation {\tt M} are associated with the same join key {\tt id}), resulting in uneven, input-dependent bucket sizes.  As a result, all existing MPC join protocols \cite{secretsharedjoins,scape,senate} are based on the sort-merge-join algorithm, which inevitably incurs an $\Omega(n\log n)$ cost.   To overcome this difficulty, our idea is to combine the two join algorithms into a ``hash-sort-merge-join''. We start with the observation that sort-merge-join does not really need a true sort.  Instead, any \textit{consistent sort} suffices, where the two relations are first sorted by the same, but arbitrary, ordering of the domain of the join key.  Next, we show that a consistent sort can be done by each party in $O(n)$ time by hashing, as opposed to the usual sorting complexity $O(n\log n)$.  The results of the consist-sort are recorded in an additional column in each relation called \textit{ranks}.  Then all the columns are secret-shared among the 3 parties,
who can then perform the merge-join step with $O(n+m)$ cost using some standard 3PC primitives. 

Extending our join protocol to a free-connex query, which involves multiple join operators (plus select and aggregate), is the next technical challenge. As seen from Table \ref{tab:comp}, not many prior protocols can support multi-way joins.  Those that do are either based on the nested-loop join or incur a high polylogarithmic overhead.  The only exception is \cite{secyan}. However, it actually dodges the MPC join problem, as we explain in more detail in Section \ref{sec:related}.  We use a simple 3-way join $R_1 \Join R_2 \Join R_3$ to illustrate the technical difficulty here.  Suppose each of the 3 relations is owned by a party. 
To first compute $R_1\Join R_2$, the first two parties perform consistent sorting of $R_1$ and $R_2$ to obtain their ranks.  Then our protocol computes $R_1\Join R_2$.  However, we cannot apply the same process to compute $(R_1 \Join R_2) \Join R_3$ because the intermediate join results $R_1 \Join R_2$ are only available in secret-shared form.  In particular, it is not possible to do consistent-sort on secret-shared data with linear cost.  To overcome this difficulty, we exploit some important relationships between the ranks in the original $R_1,R_2$ and the ranks in $R_1\Join R_2$, and manage to transform the former into the latter, still with linear complexity.

Finally, in order to make LINQ practical, we need to pay attention to the hidden constant in the big-Oh.  For a given free-connex query, we first enumerate all the $O(n+m)$-complexity query plans.  Then We have designed a cost model to estimate the cost of each query plan, and choose the best one for execution.  We have built a system prototype by integrating these components.
Our experiments show that our system outperforms the state of the art systems significantly, and can finish the join of three relations that outputs 1 million tuples in around 100s in the LAN setting, thus bringing MPC query processing one step closer to practicality.

\subsection{Related Work}
\label{sec:related}

\begin{table*}
\renewcommand\arraystretch{1.2}
	\centering
	\begin{tabular}{|c|c|c|c|c|c|}
		\hline
		\multirow{2}{*}{Methods} & \#Parties / & \multirow{2}{*}{Two-way join} & \multicolumn{2}{c|}{Multi-way join on $k$ relations} \\
            \cline{4-5}
             & \#Corrupted parties & & Free-connex join & PK-FK join \\
		\hline \hline
		SMCQL \cite{smcql} & 2 / 1  & $O(n^2)$ & $O(n^k)$ & $O(n^k)$ \\
		Secrecy \cite{secrecy} & 3 / 1 &  $O(n^2)$ & $O(n^k)$ & $O(n^k)$ \\
		\hline
		SSJ \cite{secretsharedjoins} & 3 / 1 &  $O(n\log n+m\log m)$ & Unsupported & $O(n \log n)$\\
		Scape \cite{scape} & 3 / 1 & $O(n \log^2 n + m)$ & Unsupported & $O(n \log^2 n)$\\
        OptScape & 3 / 1 & $O(n \log n + m)$ & Unsupported & $O(n \log n)$ \\ \hline
        SECYAN$^*$ \cite{secyan} & 2 / 1 & $O(n\log n+m)$ & $O(n\log n+m)$ & $O(n \log n)$\\
        QCircuit \cite{wang2022query} & Any & $O((n+m)\log^4(n+m))$ & $O((n+m)\log^4(n+m))$ & $O(n \log^4 n)$\\\hline
		LINQ (Ours) & 3 / 1 & $O(n + m)$ & $O(n + m)$ & $O(n)$\\\hline
	\end{tabular}
	\caption{Comparison between LINQ and previous works, where $n$ and $m$ are the input and output size of the query, respectively.  $^*$SECYAN returns the query results to one party in plaintext.}
	\label{tab:comp}
\end{table*}

Table~\ref{tab:comp} summarizes the existing MPC join algorithms.  The first two algorithms are based on the nested-loop join, which has an $O(n^2)$ complexity.  This balloons to $O(n^k)$ for a query on $k$ relations.  Note that the output size $m$ does not show up as it is dominated by $O(n^k)$.  

The next three algorithms are based on the sort-merge-join algorithm.  The two almost concurrent works of SSJ \cite{secretsharedjoins} and Scape \cite{scape} have achieved similar, but incomparable, results.  However, we observe that Scape used the $O(n\log^2 n)$ bitonic sort instead of the optimal $O(n \log n)$ sorting algorithm \cite{hamada2013sort, efficient3pcsorting}. We have re-implemented Scape using the $O(n\log n)$ sorting protocol, and denote this optimized version as OptScape. OptScape is then better than both Scape and SSJ.   We note that these three protocols also work in the \textit{three-server} model, a variant of 3PC that takes input data in secret-shared form. This setting is stronger than 3PC, as detailed in Section~\ref{sec:3server}. However, even assuming the two input relations are already ordered by their join keys, they still require $O(n \log n + m)$ complexity.  Furthermore, these protocols do not support queries with multiple join operators; the simple composition of their two-way join protocols would reveal the intermediate join sizes instead of just the final output size, breaching the security guarantee.  Nevertheless, in the special case wheref all joins are PK-FK joins, the intermediate join sizes are bounded by the input size, so can be padded to $n$.  Thus, for multi-way PK-FK joins, this composition is secure, and the complexity becomes $O(n \log n)$.   Note that the complexity of LINQ becomes $O(n)$ on PK-FK joins as $m\le n$. 

SECYAN \cite{secyan} computes any free-connex query in $O(n\log n+m)$ time in the 2PC model, but one of the two parties will get the query results in plaintext.  This is problematic for multi-way join queries without aggregation in at least two aspects.  The first is technical.  Note that if there are no dangling tuples, i.e., tuples that do not join with any tuple in the other relations, then the multi-way join results would consist of all the input data.  Thus, it is okay to reveal all input data upfront, trivializing the problem.  So the only care that must be taken is the dangling tuples should be removed (more precisely, should be changed to dummies), and the problem essentially boils down to a series of semi-joins. This is also why they call their protocol a secure version of the Yannakakis algorithm \cite{yannakakis}, in which the central step consists of two passes of semi-joins.  As such, SECYAN does not include an MPC join algorithm per se. The second reason is the applicability to real-world scenarios.  A multi-way join query without aggregation would return too many results.  If they are all revealed, this would still lead to privacy breaches.  To fix this issue, we require the query output be given in secret-shared form.  In real applications, these secret-shared results can then be fed into a downstream application, e.g., an MPC machine learning algorithm to train a model.  We note that all the other protocols listed in Table~\ref{tab:comp}, including ours, return the query results in secret-shared form.

QCircuit \cite{wang2022query} presents a circuit that evaluates any free-connex query with size $O((n+m)\log^4(n+m))$, which can be executed with some generic MPC protocol \cite{yao1982protocols, goldreich1987how, benor1988completeness}.  But this is impractical due to its high polylogarithmic factor.

There are other MPC algorithms that focus on some restricted join types:  SSA \cite{asharov2023secure} presents $O(n \log n)$-cost algorithms for PK-FK joins and group-by operations under the three-server honest majority setting.  Senate \cite{senate} gives an $O(n \log n)$ algorithm for the even more restricted PK-PK joins.
FDJ \cite{fastdatabasejoin} reduces the complexity to linear, but it  only works for PK-PK joins.
Without the key constraints, these protocols must reveal the degree information, thus not satisfying the security definition of MPC.

\section{Preliminaries}

\subsection{Secret-Sharing} 
\label{sec:secret-share}
Secret sharing is a construction at the core of most MPC protocols.  An \textit{$(a,b)$-secret sharing scheme} splits an $\ell$-bit secret $v$ into $b$ shares, such that any $a-1$ of the shares reveal no information about $v$, while any $a$ shares allow the reconstruction of $v$.  In this paper, we adopt the following $(2,3)$-secret sharing scheme \cite{aby3}: Pick two random $\ell$-bit numbers $v_0, v_1$ independently, and set $v_2=v \oplus v_0 \oplus v_1$, where $\oplus$ denotes bit-wise XOR.  Then the three shares are $(v_0,v_1),(v_1,v_2)$, and $(v_2,v_0)$, respectively.  It should be clear that any two shares can reconstruct $v$, while any single share just consists of two independent random numbers, hence revealing nothing about $v$. 

\subsection{The Three-Party Model}\label{sec:model}
We now formally define the 3PC model under the context of query processing.  Let $\mathbf{R}$ be a database consisting of $k$ relations.  The schema of $\mathbf{R}$ is public but the data in each relation is privately owned by one party.  It is possible that some party does not own any data; in this case, it only takes part in the computation.
Given any query $Q$ on $\mathbf{R}$,
the three parties jointly execute a protocol to evaluate $Q$. 

The security of a protocol under 3PC is measured by its ability to defend against some adversaries. There are two types of adversaries: \textit{semi-honest} and \textit{malicious}. A \textit{semi-honest} adversary can view the transcript (i.e., all the messages sent and received during the protocol) of some parties, which are called \textit{corrupted parties}.  The security guarantee is that the adversary should not be able to infer anything about the data of the non-corrupted parties beyond the input and output size.  Following \cite{fastdatabasejoin, secrecy, secretsharedjoins,scape}, we only consider 3PC with honest majority, i.e., at most one party is corrupted.

We require the output of the protocol to be the secret shares of the query results.  The parties may reconstruct the query results if that is what has been agreed upon, or feed the results to some downstream applications.  More formally, this is captured by the following definition using the real-world ideal-world paradigm:

\begin{definition}
\label{def:security}
Let $x=(x_1,x_2,x_3)$ be the private inputs of the three parties where $x_i$ is the input of the $i$-th party for $i\in\{1,2,3\}$, and let $\mathrm{Real}_\pi(\kappa, C; x)$ be real-world view of the adversary, which consists of the transcript of protocol $\pi$ of the corrupted parties $C$ running with security parameter $\kappa$ on input $x$.
We say that $\pi$ is secure against a semi-honest adversary if there exists a (possibly randomized) function $\mathrm{Ideal}_Q(\kappa, C; x_C; n_1,n_2,n_3,m)$ such that, for any $C$ and $x_C$, $n_i=|x_i|$ ($i=1,2,3$), and $m=|Q(x)|$, the distributions of $\mathrm{Real}_\pi(\kappa, C, x)$ and $\mathrm{Ideal}_Q(\kappa, C; x_C; n_1,n_2,n_3,m)$ are indistinguishable in terms of $\kappa$, i.e., no polynomial-time (in $\kappa$) algorithm can distinguish them by probability more than $1/2 + \nu(\kappa)$ where $\nu(\kappa)$ is a negligible function of $\kappa$.
\end{definition}

This definition thus captures the intuition that the real-world view contains no information beyond the ideal-world view, which only depends on the data of the corrupted parties and the input/output size.
The semi-honest adversary only passively observes the corrupted parties.  On the other hand, a \textit{malicious} adversary has the additional power to make the corrupted parties deviate from the protocol.  In this paper, we only consider a semi-honest adversary. There are some standard techniques to harden a semi-honest protocol to defend against a malicious adversary \cite{chaum1984blind, goldreich1987how, scape, spdz}; we leave this extension to future work.

\paragraph{Cost model}
As with all prior works, we focus on the asymptotic behavior of the protocols in terms of the input size $n$ and output size $m$, while suppressing the dependency on the other parameters in the big-O notation.  This includes the query size (i.e., the number of relations and variables/attributes), the bit-length of the attributes, and the security parameter.  Nevertheless, when examining the concrete performance of a query plan, we do take these parameters into consideration.  When the running time and communication are asymptotically the same, we use the term ``cost'' to refer to both.
In addition to running time and communication cost, the number of communication rounds is also important.  We ensure that all the protocols in this paper run in $O(\log (n+m))$ rounds.

\subsection{Dummy Tuples}
For query processing on secret-shared data, we often need to pad dummy tuples to intermediate relations without letting the parties know whether a tuple is dummy or not.   This is done by adding a \textit{dummy marker} attribute to each intermediate relation.  The value of a dummy marker is either 0 (dummy) or 1 (not dummy), which is also secret-shared to the three parties.  In the description of the algorithms, we use $\bot$ to denote a dummy tuple. All dummy tuples are defined to be equal, while any non-dummy tuple is not equal to any dummy tuple.

\subsection{Three-Party Primitives}\label{sec:basic_func}

We will make use of the following primitives, where the input and output of each primitive are secret-shared with fresh randomness.
Since each primitive has been shown to be secure (i.e., satisfying Definition \ref{def:security}), any sequential composition of them is also secure \cite{security2000ran}.

\paragraph{Circuit-based computation}
There are 3PC protocols \cite{aby3} that, given the secret shares of two values $x$ and $y$, output $x\oplus y$ in secret-shared form (henceforth all input, intermediate data, and output are secret-shared unless specified otherwise), where $\oplus$ can be any standard arithmetic (addition, multiplication, comparison, etc.) or logical operation (AND, OR, XOR).  These protocols thus allow any circuit-based computation to be carried out in the 3PC model, with the cost proportional to the number of gates of the circuit and the number of rounds proportional to the depth.  In particular, we will make use of the following circuit-based primitive.

\paragraph{Prefix sum}
Given a binary associative operator $\oplus$, the prefix sum primitive takes an array $X=(x_1,\dots,x_n)$ as input and outputs the array $(x_1,x_1\oplus x_2,\dots,x_1\oplus x_2\oplus\dots\oplus x_n)$.  There is a classical $O(n)$-size, $O(\log n)$-depth circuit for this problem \cite{ladner1980parallel}.

We will also need a slightly more general version, called \textit{segmented prefix sum}.  It takes two arrays $A=(a_1,\dots,a_n)$ and $X=(x_1,\dots,x_n)$ as input, where equal elements in $A$ that appear consecutively define a segment.
The outputs are the prefix sums of each segment in $X$.
For example, if $A=(2,2,4,\bot,\bot)$, then the output is $(x_1,x_1\oplus x_2,x_3,x_4,x_4\oplus x_5)$. 
\cite{wang2022query} shows how to adapt the  prefix-sum circuit to compute segmented prefix sums.

There are two common choices for $\oplus$: When $\oplus$ is the arithmetic addition $+$ (define $x+\bot=\bot$),  
we simply call the output of (segmented) prefix sum as the \prefix{sum} of $X$ (segmented by $A$). The other common choice is 
\[
x \oplus y :=
\begin{cases}
    x, & \text{if } y = \bot; \\
    y, & \text{otherwise.} \\
\end{cases}
\]
This instantiation, which we call \prefix{copy}, replaces each dummy value in $X$ with the last non-dummy value before it (if there is one).

\paragraph{Permutation}
Permutation takes an array $X = (x_1, x_2, \cdots, x_n)$ and an array of permutation indices $P = (p_1, p_2, \cdots, p_n)$ (i.e., $P$ is a permutation of $[n] = (1, 2, \cdots, n)$), and outputs an array $Y = (y_1, y_2, \cdots, y_n)$, where $y_{p_i} = x_i$ for all $i\in [n]$, i.e., each $x_i$ is moved to position $p_i$ in the output. 
We call the output $Y$ the \texttt{permutation} of $X$ specified by $P$.
There is a 3PC permutation protocol with $O(n)$ cost and $O(1)$ rounds \cite{graphanalysis}. 

\begin{figure*}
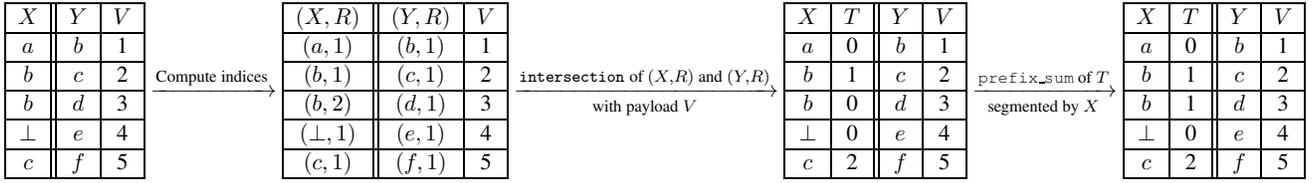

\resizebox{\textwidth}{!}{
\begin{tabular}{c}
\begin{tabular}{|c||c|c|}
\hline
$X$ & $Y$ & $V$ \\ \hline
$a$ & $b$ & 1  \\ \hline
$b$ & $c$ & 2  \\ \hline
$b$ & $d$ & 3  \\ \hline
$\perp$ & $e$ & 4  \\ \hline
$c$ & $f$ & 5  \\ \hline
\end{tabular}
$\xrightarrow{\text{Compute indices}}  $
\begin{tabular}{|c||c|c|}
\hline
$(X,R)$ & $(Y,R)$ & $V$ \\ \hline
$(a,1)$ &  $(b,1)$ & 1  \\ \hline
$(b,1)$ &  $(c,1)$ & 2  \\ \hline
$(b,2)$ &  $(d,1)$ & 3  \\ \hline
$(\perp,1)$ & $(e,1)$ & 4  \\ \hline
$(c,1)$ & $(f,1)$ & 5  \\ \hline
\end{tabular}
$\xrightarrow[\text{with payload } V]{\mathtt{intersection}\text{ of }(X,R)\text{ and }(Y,R)}  $
\begin{tabular}{|c|c||c|c|}
\hline
$X$ & $T$ & $Y$ & $V$ \\ \hline
$a$ & 0 & $b$ & 1  \\ \hline
$b$ & 1 & $c$ & 2  \\ \hline
$b$ & 0 & $d$ & 3  \\ \hline
$\perp$ & 0 & $e$ & 4  \\ \hline
$c$ & 2 & $f$ & 5  \\ \hline
\end{tabular}
$\xrightarrow[\text{segmented by }X]{\prefix{sum}\text{ of }T}  $
\begin{tabular}{|c|c||c|c|}
\hline
$X$ & $T$ & $Y$ & $V$ \\ \hline
$a$ & 0 & $b$ & 1  \\ \hline
$b$ & 1 & $c$ & 2  \\ \hline
$b$ & 1 & $d$ & 3  \\ \hline
$\perp$ & 0 & $e$ & 4  \\ \hline
$c$ & 2 & $f$ & 5  \\ \hline
\end{tabular}
\end{tabular}
}
\caption{Extended intersection example}
\label{fig:intersection}
\end{figure*}

\paragraph{Compaction}
This primitive takes two arrays $X=(x_1,\dots,x_n)$ and $T=(t_1,\dots,t_n)$ as input, where each $t_i\in\{0,1\}$. 
Each $x_i$ is said to be \textit{marked} if $t_i=1$. The \texttt{compaction} of $X$ on $T$ permutes $X$ such that all marked elements appear before non-marked elements, and the relative ordering of the marked elements must be preserved.  It is known that \texttt{compaction} can be done by \prefix{sum} and \texttt{permutation} \cite{efficient3pcsorting}, hence it also has $O(n)$ cost and $O(\log n)$ rounds. 
A useful case of \texttt{compaction} is to move all non-dummy tuples to be in the front of dummy tuples, where we use the dummy markers as $T$.


\paragraph{Intersection}
Intersection takes two arrays $X=(x_1,\dots,x_n)$ and $Y=(y_1,\dots,y_n)$ as input, where all non-dummy elements in $X$ are distinct and so are the non-dummy elements in $Y$.  It outputs an array $T=(t_1,\dots,t_n)$ such that $t_i=1$ indicates $x_i=y_j$ for some $j$, otherwise $t_i=0$. An exception is that for any dummy value $x_i$, it must have $t_i=0$.
A more general version of \texttt{intersection} supports payloads, in which each $y_j$ is also attached with a payload $v_j$. In the output, $t_i=v_j$ if $x_i=y_j$ for some $j$ (note that there is at most one $j$ such that $x_i=y_j$), otherwise $t_i=0$. Similarly, if $x_i$ is dummy then it must have $t_i=0$.  \cite{fastdatabasejoin} provide a 3PC \texttt{intersection} protocol with payload that has $O(n)$ cost and $O(1)$ rounds.  Their protocol does not directly support dummy elements, but they can be first changed to something that is guaranteed not to intersect. For example, we can change each $x_i$ (resp.\ $y_i$) to $(x_i,0)$ (resp.\ $(y_i,0)$) if it is not dummy, and to $(i,1)$ (resp.\ $(i+n,1)$) if it is.  This can be done by a circuit of size $O(n)$ and depth $O(1)$.

\paragraph{Extended intersection}
For our purpose, we will need a further generalization of \texttt{intersection}, where elements in $X$ are not necessarily distinct, but equal elements must be consecutive.  We still require the elements in $Y$ to be distinct so that the output is still well-defined. 
Here is our solution:  We first assign consecutive indices to each segment of equal elements in $X$.  This can be done by computing the \prefix{sum} of $(1,1,\dots,1)$ segmented by $X$.  Let $r_i$ be the index of $x_i$.  Next, we compute the \texttt{intersection} of $((x_1,r_1),\dots,(x_n,r_n))$ and $((y_1,1),\dots,(y_n,1))$, possibly with payload. For each segment in $X$, only the first element (i.e., the one with $r_i=1$) gets the correct payload, while other elements get 0. Finally, we use segmented \prefix{sum} to copy the payload of the first element in each segment to the other elements.  Please see Figure~\ref{fig:intersection} for an example.
Thus the protocol has $O(n)$ cost and $O(\log n)$ rounds.

\paragraph{Expansion}
Expansion takes two arrays $X=(x_1,\dots,x_n)$ and $D=(d_1,\dots,d_n)$ as input, together with a parameter $m$.  It is guaranteed that (1) $x_i=\bot$ implies $d_i=0$; and (2) $D_{\Sigma}:=\sum_{i=1}^n d_i \le m$. Each $d_i$, which we call the \textit{degree} of $x_i$, is a non-negative integer (hence non-dummy) that indicates the number of repetitions that $x_i$ should appear in the output. The output is a length-$m$ array:
\[(\underbrace{x_1,\dots,x_1}_{d_1\text{ times}},\underbrace{x_2,\dots,x_2}_{d_2\text{ times}},\dots,\underbrace{x_n,\dots,x_n}_{d_n\text{ times}},\underbrace{\bot,\dots,\bot}_{m-D_{\Sigma}\text{ times}}).\]
We call the output array the \texttt{expansion} of $X$ with degree $D$ and output size $m$. \cite{scape} show how this can be done with $O(n+m)$ cost and $O(\log(n+m))$ rounds under the constraints that $m=D_{\Sigma}$ and $d_i\neq 0$ for all $i$ (hence $X$ contains no dummies).  Their algorithm works as follows:
\begin{enumerate}
    \item Compute the first index where each $x_i$ will appear in the output array. This can be done by computing the \prefix{sum} of $\{1,d_1,d_2,\dots,d_{n-1}\}$. Denote the result as $\{p_1,\dots,p_n\}$.
    \item Distribute each $x_i$ to index $p_i$.  Let $Y=\{y_1,\dots,y_m\}$ be the resulting array.  Any $y_j$ such that $j\ne p_i$ for all $i$ is dummy.
    \item Run \prefix{copy} on $Y$.
\end{enumerate}

Below we show how to remove the two constraints above. To remove the constraint $m=D_{\Sigma}$, we just need to set those elements in $Y$ with positions larger than $D_{\Sigma}$ to dummy after the final step, which can be done by a circuit of size $O(m)$ and depth $O(1)$. Removing the $d_i\ne 0$ constraint is trickier, since $d_i=0$ means that $x_i$ and $x_{i+1}$ have the same target index, and step (2) above will fail. Our solution is that, after (1), we set $p_i$ to $m+i$ if $d_i=0$.  Note that this guarantees that all $p_i$'s are distinct.  Then we run step (2) to output a length-$(m+n)$ array, where those $x_i$ with $d_i=0$ will be moved to index $p_i>m$. After step (2), we discard the last $n$ elements, and run step (3). It should be clear that this modified algorithm still has  $O(n+m)$ cost and $O(\log(n+m))$ rounds.

\section{Relational Operators}
In this section, we first introduce \textit{consistent sort} and the corresponding \textit{ranks}.  Then we show how we support various relational operators in linear complexity using these ranks.
Our protocol descriptions include conditional data assignment \textbf{if-then-else} statements. When the condition is secret-shared, this assignment is realized by a series of comparison and multiplexers on the secret shares. We keep such statements for simplicity, without explicitly discussing their concrete implementations.

\subsection{Consistent Sort and Ranks}\label{sec:rank}
We introduce the first step of LINQ in this section, which is the parties performing consistent sort on the relations they own in plaintext.
Let $\bm{E}$ be a possible join key or group-by key, which may consist of one or multiple attributes.  Let $\mathbf{dom}(\bm{E})$ be the domain of $\bm{E}$.  We say that the relations have been \textit{consistently sorted} on $\bm{E}$, if (1) tuples with the same values on $\bm{E}$ in each relation are all consecutive, and (2) for any two values $x,y \in \mathbf{dom}(\bm{E}), x\ne y$, their relative ordering in all the relations are the same (if they both appear).  For example, consider two relations $R_1(A,B), R_2(B,C)$ with join key $\bm{E}=B$.  Then $R_1=((a_1,3), (a_2, 1), (a_3, 1), (a_4, 2))$ and $R_2=((4,c_1), (3,c_2), (3,c_3), (1, c_4))$ are consistently sorted on $B$ (the ordering is 4,3,1,2), but $R_1$ and $R'_2=((1, c_1), (3, c_2), (3, c_3),(4,c_4))$ are not.  Our observation is that, to compute $R_1\Join R_2$ using sort-merge-join, a consistent sort suffices.

Sorting all relations by $\bm{E}$ produces a consistent sort, but this takes $O(n\log n)$ time.  Below we show that a consistent sort can be actually done in $O(n)$ time due to its more relaxed requirement. 
Let $h:\mathbf{dom}(\bm{E})\rightarrow [n]$ be a public pairwise-independent hash function.  For each relation $R(\bm{F})$ such that $\bm{E} \subseteq \bm{F}$,  we first hash the tuples to $n$ buckets using $h$, where the $i$-th bucket is $H_i := \{t\in R \mid h(t.\bm{E}) = i\}$ for  $i\in[n]$.  Then we sort each $H_i$ by the natural ordering of $\bm{E}$ and output $H_1,H_2,\dots,H_n$ in order.  To sort each $H_i$, we take the advantage that the tuples in $H_i$ have only $O(1)$ distinct $\bm{E}$ values in expectation.  So we first sort the distinct $\bm{E}$ values, count how many tuples each distinct $\bm{E}$ value has, and compute their prefix sums.  These prefix sums indicate the starting position for each group of tuples having the same value on $\bm{E}$.  We can then put all the tuples in $H_i$ in the order of $\bm{E}$ in one pass. 

\begin{example}
    Suppose we do a consistent sort of $R(A, B) = ((a_1, 2)$, $(a_2, 1), (a_3, 3), (a_4, 1),(a_5, 4),(a_6, 2))$ on $\bm{E} = B$.  The hash function has $h(1) = 3$, $h(2) = 3$, $h(3) = 1$, $h(4)=3$.  The hash buckets are therefore $H_1 = ((a_3, 3))$, $H_3 = ((a_1, 2), (a_2, 1), (a_4, 1), (a_5, 4), (a_6, 2))$, $H_2=H_4=\emptyset$.

    Then we sort each $H_i$. Here we only consider $H_3$, which contains 3 distinct $\bm{E}$ values: $1,2,4$, and they have 2, 2, and 1 tuples, respectively.  The prefix sum yields the starting positions: $p_1 = 0$, $p_2 = 2$, $p_4=4$ (assuming the output array is 0-indexed).  Then we scan the tuples in $H_3$ and put them into the right locations: $(a_1, 2)$ is assigned to location $p_2 = 2$, and we increment $p_2$; $(a_2, 1)$ is assigned to location $p_1 = 0$, and we increment $p_1$; $(a_4, 1)$ is  assigned to location $p_1 = 1$, and we increment $p_1$; etc. 
    So the sorted $H_3 = ((a_2, 1), (a_4, 1), (a_1, 2), (a_6, 2), (a_5, 4))$.

    Finally, we concatenate all the $H_i$ to get the consistently sorted $R=((a_3, 3), (a_2, 1), (a_4, 1), (a_1, 2), (a_6, 2), (a_5, 4))$.
\end{example}

It should be clear that this algorithm produces a consistent sort, since the ordering is first decided by $h$, and then ties are broken by the natural order of $\bm{E}$.  We analyze its running time in the following lemma:

\begin{lemma}
    \label{lem}
    The algorithm above takes $O(n)$ time in expectation to consistently sort a relation by $\bm{E}$. 
\end{lemma}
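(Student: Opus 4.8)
The plan is to account separately for the three sources of work and show that only the per-bucket sorting of distinct values can be superlinear, then control that term with a second-moment argument on the hash. First I would observe that distributing the $n$ tuples into the buckets $H_1,\dots,H_n$ costs $O(n)$, since each tuple requires one evaluation of $h$ and one insertion. Within a bucket $H_i$, identifying the distinct $\bm{E}$ values together with their multiplicities (via a hash table keyed by $\bm{E}$), forming the prefix sums, and scattering the tuples into their target positions in a single pass all cost $O(|H_i|)$; summed over $i$ this is $O(\sum_i |H_i|) = O(n)$. Hence the total running time is $O(n) + O\bigl(\sum_i k_i \log k_i\bigr)$, where $k_i$ denotes the number of distinct $\bm{E}$ values falling in $H_i$ and $O(k_i \log k_i)$ is the cost of sorting those distinct values by the natural order.

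It therefore remains to bound $\E\bigl[\sum_i k_i \log k_i\bigr]$. Let $e_1,\dots,e_D$ be the distinct values of $\bm{E}$ appearing in the relation, so $D \le n$, and note $k_i = \sum_{j=1}^D \mathbf{1}[h(e_j)=i]$ and $\sum_i k_i = D$. Since $\log_2 k_i \le k_i$ for every integer $k_i \ge 1$ (and the term vanishes when $k_i = 0$), it suffices to show $\E\bigl[\sum_i k_i^2\bigr] = O(n)$.

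The key step is to rewrite the second moment as a collision count. Expanding the squares,
\[
\sum_{i=1}^n k_i^2 = \sum_{i=1}^n \sum_{j=1}^D \sum_{l=1}^D \mathbf{1}[h(e_j)=i]\,\mathbf{1}[h(e_l)=i] = \sum_{j=1}^D \sum_{l=1}^D \mathbf{1}[h(e_j)=h(e_l)].
\]
The $D$ diagonal terms ($j=l$) each contribute $1$, and for each of the $D(D-1)$ off-diagonal pairs, pairwise independence of $h$ over the codomain $[n]$ gives $\Pr[h(e_j)=h(e_l)] = 1/n$. Hence $\E\bigl[\sum_i k_i^2\bigr] = D + D(D-1)/n \le n + n^2/n = 2n$, and combining with the previous paragraph yields the claimed $O(n)$ expected running time.

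The crux of the argument is this collision bound: a naive estimate such as $\log k_i \le \log n$ only yields $O(n\log n)$, so the linear bound genuinely relies on pairwise independence to keep the expected number of colliding distinct-value pairs $O(n)$. The remaining points I would double-check are that the within-bucket grouping and scatter are truly $O(|H_i|)$ (a constant-time-per-tuple hash table suffices) and that the hash codomain has size exactly $n$, which is what makes the off-diagonal collision probability $1/n$ rather than larger.
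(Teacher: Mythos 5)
Your proof is correct and follows essentially the same route as the paper: both isolate the per-bucket sorting of distinct $\bm{E}$ values as the only potentially superlinear step and bound its expected cost by the second moment of the bucket occupancies, using pairwise independence to make each colliding pair contribute $1/n$. The only cosmetic differences are that you sum the collision count over all buckets at once rather than analyzing one bucket and invoking symmetry, and you bound $k_i\log k_i$ by $k_i^2$ where the paper charges $X^2$ directly for a bubble sort.
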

\begin{proof}
It should be clear that hashing the tuples to the $n$ buckets, counting the frequencies of each distinct $\bm{E}$ value in each bucket, computing the prefix sums, and the final pass to put items in the right locations all take linear time.  The only possibly non-linear step is sorting the distinct $\bm{E}$ values in each bucket, which we analyze below.

Without loss of generality, consider the worst-case scenario where all $n$ tuples have distinct $\bm{E}$ values, denoted as $x_1, x_2, \cdots, x_n$. Let $X$ be the number of tuples hashed to $H_1$, and $X_i$ be the indicator that $x_i$ is hashed to $H_1$, i.e., $X_i=1$ if $h(x_i)=1$, otherwise $X_i=0$. 
Then $\E[X_i]=\Pr[X_i=1]=1/n$ and $X=\sum_{i=1}^n X_i$. Since $h$ is a pairwise-independent hash function, we have $\E[X_iX_j]=\Pr[X_i=1,X_j=1]=\Pr[X_i=1]\cdot \Pr[X_j=1]=1/n^2$ for any $i\neq j$. Then
\begin{align*}
    \E[X^2]&=\E\left[\sum_{i=1}^n X_i^2+\sum_{i\neq j}X_iX_j\right]=\sum_{i=1}^n \E[X_i]+\sum_{i\neq j}\E[X_iX_j]\\
    &=n\cdot \frac{1}{n}+n(n-1)\cdot \frac{1}{n^2}<2.
\end{align*}

Therefore, the cost of sorting $H_1$ (by bubble sort) is expected to be $\E[X^2]=O(1)$. By symmetric and linearity of expectation, the total cost of sorting all the buckets is expected to be $O(n)$.
\qedhere
\end{proof}

After consistently sorting $R(\bm{F})$ by $\bm{E}$, we add a new attribute $\bmrank{E}$, which we call the \textit{rank} of $\bm{E}$, to $R$. To differentiate from original attributes, we refer to it as a \textit{rank attribute}. For any $t\in R$, $t.\rank{\bm{E}}$ is simply the index of $t\in R$ after the sort.  If there are dummy tuples in $R$, we define $h(\bot)=n$ and $t.\bm{E}< \bot.\bm{E}$ for any $t\neq\bot$, so that \textit{the ranks of dummy tuples are always larger than that of non-dummy tuples}.  

Let $\mathcal{E}$ be the set of all join and group-by keys.  For each $R(\bm{F})$, the party computes a rank attribute $\bmrank{E}$ for each $\bm{E} \in \mathcal{E}$ by consistent sort. 
The total running time is still $O(n)$, as we take the number of relations and attributes as constant.  We use the notation $R(\bm{F};\rank{\mathcal{E}})$, where $\rank{\mathcal{E}}=\{\rank{\bm{E}}\mid \bm{E}\subseteq \bm{F}, \bm{E}\in\mathcal{E}\}$ to denote a relation $R$ augmented with the ranks $\rank{\mathcal{E}}$. 

\subsection{Basic Relational Operators}
Ranks, as we will see below, help to reduce the query processing cost to linear.  Meanwhile, as we consider a query plan consisting of multiple relational operators, these ranks must remain valid after each operator.

For example, a join operator could filter a tuple out or make multiple copies of a tuple, which turns the original ranks to be invalid. It requires non-trivial recomputation to the ranks so that they still form a permutation of $[n]$ and it corresponds to the consistent sort, i.e., they are always the indices of the tuples after sorting by $h$ and then by $\bm{E}$. 

\paragraph{Selection}
The selection operator $\sigma_{\gamma}(R)$ returns the subset of tuples of $R$ that pass the predicate $\gamma$, i.e., $\sigma_{\gamma}(R):=\{t\in R\mid \gamma(t)=\mathtt{true}\}$.  We assume that $\gamma$ can be evaluated by a circuit of $O(1)$ size, so it takes $O(n)$ cost and $O(1)$ rounds to evaluate $\gamma$ on all tuples of $R$.  However, since the results of the predicates are also secret-shared, the parties cannot and should not remove them physically.  Instead, we set a tuple to dummy if it does not pass the predicate.  

Since some tuples are turned into dummies after selection, the ranks are no longer valid.  The parties cannot afford to recompute them as the consistent sort algorithm above can only be done in plaintext.  Thankfully, since $\sigma_\gamma(R)$ is a subset of $R$, we can update the ranks by first  permuting the tuples by the old ranks and then compacting out the dummy tuples, as shown in Algorithm~\ref{alg:update_rank}.

\begin{algorithm}
    \caption{Update ranks}
    \label{alg:update_rank}
    \KwIn{Relation $R(\bm{F}; \rank{\mathcal{E}})$}
    \KwOut{Relation $R(\bm{F}; \rank{\mathcal{E}})$ with updated ranks}
    \For{$\bmrank{E} \in \rank{\mathcal{E}}$}{
        $R\gets$ \texttt{permutation} of $R$ by $R.\bmrank{E}$\;
        $R\gets$ \texttt{compaction} of $R$\;
        \ParFor{$i\gets 1$ \KwTo $n$}{
            $R[i].\bmrank{E}\gets i$\;
        }
    }
    \KwRet{$R$}
\end{algorithm}

\paragraph{Group-by-aggregation}
Let $\pi_{\bm{E}}^\oplus$ be the group-by-aggregation operator with  group-by attributes $\bm{E}$ and aggregate function $\oplus$, 
which can be \texttt{count}, \texttt{sum}, \texttt{max}, or \texttt{min}.
We first permute $R$ by $\bmrank{E}$. Then tuples that have the same value on $\bm{E}$ are consecutive.  Then we can compute the aggregation by evaluating the prefix sum circuit segmented by $\bm{E}$, using an $\oplus$ corresponding to the aggregate function (e.g., $+$ for \texttt{sum}). This way, the last tuple of each segment holds the aggregate for the group, so we set the other tuples to dummy.  This step can be done by a circuit of size $O(n)$ and depth $O(1)$.  
Note that this operator also returns a subset of tuples (one from each group) from the input relation, so the ranks can be updated as before.  Please see Algorithm~\ref{alg:gba} for the complete algorithm for group-by-aggregation. 

\begin{algorithm}
    \caption{Group-by-sum protocol $\pi_{\bm{E}}^{\mathtt{sum}(A)}$}
    \label{alg:gba}
    \KwIn{Relation $R(\bm{F}; \rank{\mathcal{E}})$ with public size $n$}
    \KwOut{Group-by-sum relation $T(\bm{E}, \mathtt{sum}(A); \rank{\mathcal{E}'})$}
    $T \gets$ \texttt{permutation} of $R$ by $\bmrank{E}$\;
    Add attribute $\mathtt{sum}(A)$ to $T$ with $T.\mathtt{sum}(A) \gets$ \prefix{sum} of $T.A$ segmented by $T.\bm{F}$\;
    \ParFor{$i \gets 1$ \KwTo $n-1$}{
        \If{$T[i].\bm{E} = T[i+1].\bm{E}$}{
            $T[i] \gets \bot$\;
        }
    }
    Update rank attributes of $T$ by Algorithm~\ref{alg:update_rank}\;
    \KwRet{$T$}
\end{algorithm}

\paragraph{Projection}
For projection $\pi_{\bm{F'}}(R(\bm{F},\rank{\mathcal{E}}))$ where $\bm{F'}\subset \bm{F}$, 
we simply drop the attributes $\bm{F}-\bm{F'}$.  Meanwhile, we can also drop any rank attribute $\rank{\bm{E}}$ if $\bm{E} \not\subseteq \bm{F}'$.  There is no need to update the ranks.


Note that a distinct projection on $\bm{F'}$ should be done as a group-by-aggregation using $\bm{F'}$ as the group-by attribute, but the aggregation function is irrelevant. 

\paragraph{Semi-join}
The semi-join operator takes two relations $R(\bm{F_R};\rank{\mathcal{E}}_R)$ and $S(\bm{F}_S;\rank{\mathcal{E}}_S)$ as input, and returns the set of tuples in $R$ that can join with $S$, i.e., $R\ltimes S=\{t_1\in R\mid \exists\, t_2\in S:t_2.\bm{E}=t_1.\bm{E}\}$, where $\bm{E}=\bm{F_R}\cap \bm{F_S}$ is the join key. Similar to the selection operator, the tuples in $R$ that do not join with $S$ are not removed but set to dummy.

The idea is to first permute $R$ by $R.\bmrank{E}$ and permute $S$ by $S.\bmrank{E}$.  Then for all tuples in $S$ with the same value on $S.\bm{E}$, we set all but one to dummy.  This is done as in Line 3--5 of Algorithm~\ref{alg:gba}.  Then we compute the extended \texttt{intersection} of $R.\bm{E}$ and $S.\bm{E}$. 
Finally, we set tuples that get a zero payload to dummy and update the ranks by Algorithm~\ref{alg:update_rank}.

\subsection{Join}\label{sec:join}
It remains to show how to perform a join. Recall that a (natural) join takes two relations $R(\bm{F}_R;\rank{\mathcal{E}}_R)$ and $S(\bm{F}_S;\rank{\mathcal{E}}_S)$ as input, where $\rank{\mathcal{E}}_R$ and $\rank{\mathcal{E}}_S$ denote the rank attributes, and returns all combinations of tuples from the two relations such that they have the same values on their join key, i.e., the join result $T:=R\Join S$ is
\[T(\bm{F}_R\cup\bm{F}_S;  \rank{\mathcal{E}}_R\cup\rank{\mathcal{E}}_S)=\{(t_R,t_S)\mid t_R\in R, t_S\in S,t_R.\bm{E}=t_S.\bm{E}\},\] where the join key $\bm{E}=\bm{F}_R\cap \bm{F}_S$ is in $\mathcal{E}_R\cap\mathcal{E}_S$.  Note that the join result contains all the rank attributes $\rank{\mathcal{E}}_R\cup\rank{\mathcal{E}}_S$, so as to support multi-way joins, which will be discussed in Section \ref{sec:freeconnex}. 

Assume for simplicity that $|R|=|S|=n$.  We also assume for now that the join size $m \geq |R\Join S|$ is given. The last $m - |R \Join S|$ tuples of the join result $T$ are dummy. The idea follows \cite{krastnikov2020efficient,scape,secretsharedjoins}, but due to the rank attributes, we are able to implement it with linear complexity. For any tuple $t_R\in R$, note that the number of repetitions of $t_R$ in $R\Join S$ is exactly how many times $t_R.\bm{E}$ appears in $S.\bm{E}$. Therefore, we first count the frequencies of the values in $S.\bm{E}$, which can be done by group-by-count $\pi_{\bm{E}}^{\mathtt{count}}$. Then we permute $R$ by $\bm{E}$, and attach these frequencies to the corresponding tuples in $R$, which can be realized by  extended \texttt{intersection}. Next, we compute the \texttt{expansion} of $R$ according to these frequencies. In a symmetric way we also expand $S$. 
Then we permute $S$ by some specific permutation indices (i.e., $G$ in Algorithm~\ref{alg:join}) so as to align each tuple in $S$ with the corresponding joined tuple in $R$.
Finally, zipping the two expanded relations together yields the join results.  See Algorithm~\ref{alg:join} for details and Figure~\ref{fig:join} for an example.

\begin{algorithm}[h]
    \caption{Join protocol $\Join$}
    \label{alg:join}
    \KwIn{Relation $R(\bm{F_R}; \rank{\mathcal{E}}_R)$ and $S(\bm{F_S}; \rank{\mathcal{E}}_S)$ each with public size $n$}
    \KwOut{Join result $T(\bm{F_R}\cup\bm{F_S};  \rank{\mathcal{E}}_R\cup\rank{\mathcal{E}}_S)$ with public size $m$}
    \tcp{Use permutation to sort relations by join key.}
    $R \gets$ \texttt{permutation} of $R$ by $R.\bm{E}$\; 
    $S \gets$ \texttt{permutation} of $S$ by $S.\bm{E}$\;
    \tcp{Calculate the appearance of each tuple in $R$ and $S$, and expand the tuple by the appearance.}
    $S'{(\bm{E}, \mathtt{count})} \gets \pi_{\bm{E}}^{\mathtt{count}}(S)$\;
    $R.D_S\gets$ extended \texttt{intersection} of $R.\bm{E}$ and $S'.\bm{E}$ with payload $S'.\mathtt{count}$\;
    $R'' \gets$ \texttt{expansion} of $R$ by degree $R.D_S$ with output size $m$ (by Algorithm~\ref{alg:expansion_rank})\;
    $R'{(\bm{E}, \mathtt{count})} \gets \pi_{\bm{E}}^{\mathtt{count}}(R)$\;
    $S.D_R\gets$ extended \texttt{intersection} of $S.\bm{E}$ and $R'.\bm{E}$ with payload $R'.\mathtt{count}$\;
    $S.D_S\gets$ extended \texttt{intersection} of $S.\bm{E}$ and $S'.\bm{E}$ with payload $S'.\mathtt{count}$\;
    $S.O\gets (1,2,\dots,n)$\;
    $S.J\gets$ \prefix{sum} of $(1,\dots,1)$ segmented by $S.\bm{E}$\;
    $S'' \gets$ \texttt{expansion} of $S$ by degree $S.D_R$ with output size $m$ (by Algorithm~\ref{alg:expansion_rank})\; 
    \tcp{Rearrange the expanded relation $S$.}
    $S''.I\gets$ \prefix{sum} of $(1,\dots,1)$ segmented by $S''.O$\;
    Add an attribute $G$ to $S''$\;
    \ParFor{$i \gets 1$ \KwTo $m$} {
        $t \gets S''[i]$\;
        $S''[i].G \gets i+(t.I-1)\cdot t.D_S+t.J-(t.J-1)\cdot t.D_R-t.I$\;
    }
    $S''\gets$ \texttt{permutation} of $S''$ by $G$\;
    \tcp{Keep $S''.\bmrank{E}$ consistent with $R''.\bmrank{E}$.}
    $S''.\bmrank{E}\gets (1,\dots,m)$\;
    $T \gets (R''.\bm{F_R}, S''.\bm{F_S}; R''.\rank{\mathcal{F}_R}\cup S''.\rank{\mathcal{F}_S})$\;
    \KwRet{$T$}
\end{algorithm}

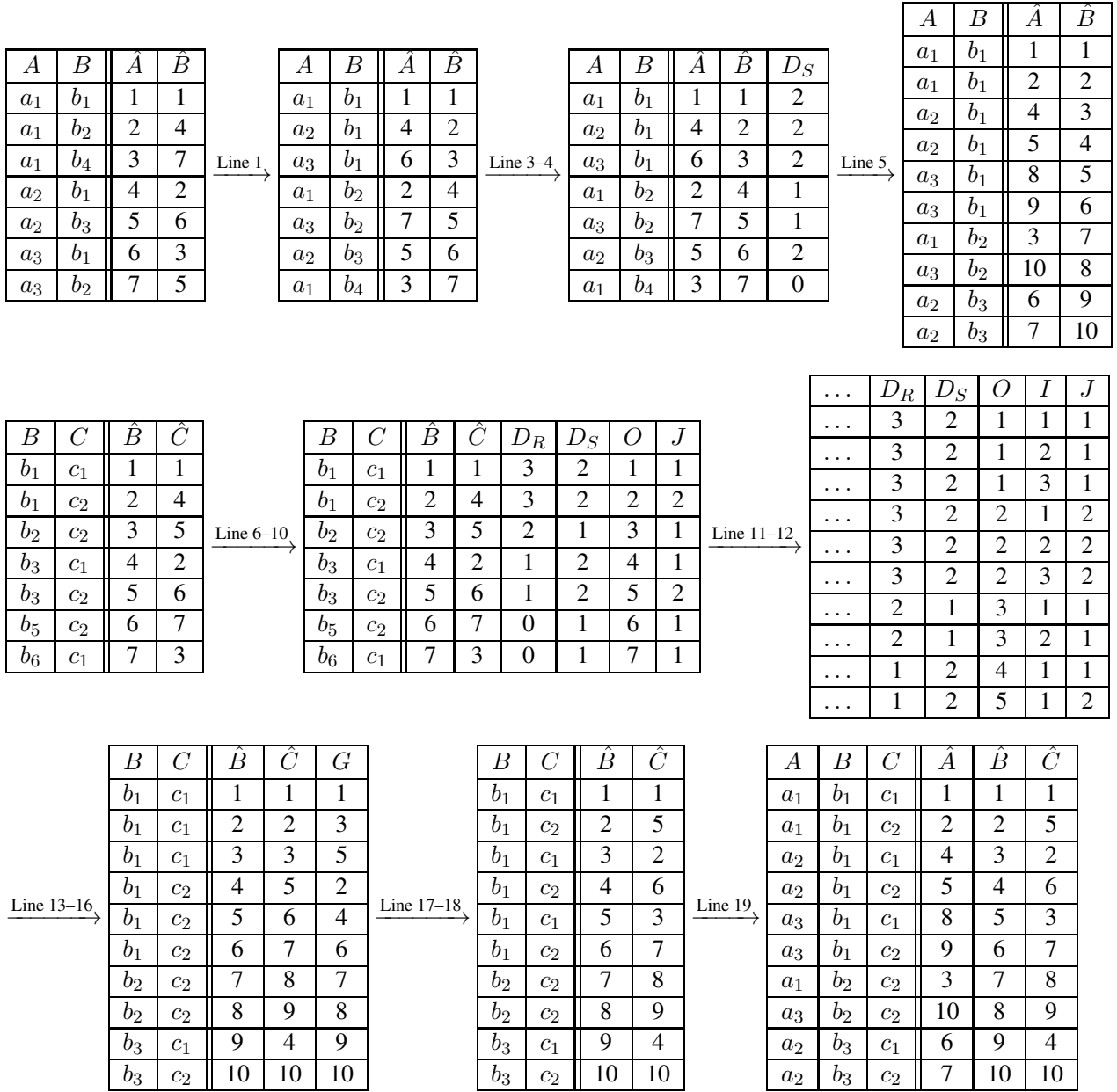
\begin{figure*}
\resizebox{\textwidth}{!}{
\begin{tabular}{l}
\begin{tabular}{|c|c||c|c|}
\hline
${A}$ & ${B}$ & ${\rank{A}}$ & ${\rank{B}}$ \\ \hline
$a_1$ & $b_1$ & 1 & 1 \\ \hline
$a_1$ & $b_2$ & 2 & 4 \\ \hline
$a_1$ & $b_4$ & 3 & 7 \\ \hline
$a_2$ & $b_1$ & 4 & 2 \\ \hline
$a_2$ & $b_3$ & 5 & 6 \\ \hline
$a_3$ & $b_1$ & 6 & 3 \\ \hline
$a_3$ & $b_2$ & 7 & 5 \\ \hline
\end{tabular}
$\xrightarrow{\text{Line 1}}$
\begin{tabular}{|c|c||c|c|}
\hline
${A}$ & ${B}$ & ${\rank{A}}$ & ${\rank{B}}$ \\ \hline
$a_1$ & $b_1$ & 1 & 1 \\ \hline
$a_2$ & $b_1$ & 4 & 2 \\ \hline
$a_3$ & $b_1$ & 6 & 3 \\ \hline
$a_1$ & $b_2$ & 2 & 4 \\ \hline
$a_3$ & $b_2$ & 7 & 5 \\ \hline
$a_2$ & $b_3$ & 5 & 6 \\ \hline
$a_1$ & $b_4$ & 3 & 7 \\ \hline
\end{tabular}
$\xrightarrow{\text{Line 3--4}}$
\begin{tabular}{|c|c||c|c|c|}
\hline
${A}$ & ${B}$ & ${\rank{A}}$ & ${\rank{B}}$ & $D_S$ \\ \hline
$a_1$ & $b_1$ & 1 & 1 & 2 \\ \hline
$a_2$ & $b_1$ & 4 & 2 & 2 \\ \hline
$a_3$ & $b_1$ & 6 & 3 & 2\\ \hline
$a_1$ & $b_2$ & 2 & 4 & 1\\ \hline
$a_3$ & $b_2$ & 7 & 5 & 1\\ \hline
$a_2$ & $b_3$ & 5 & 6 & 2\\ \hline
$a_1$ & $b_4$ & 3 & 7 & 0\\ \hline
\end{tabular}
$\xrightarrow{\text{Line 5}}$
\begin{tabular}{|c|c||c|c|}
\hline
${A}$ & ${B}$ & ${\rank{A}}$ & ${\rank{B}}$  \\ \hline
$a_1$ & $b_1$ & 1 & 1 \\ \hline
$a_1$ & $b_1$ & 2 & 2 \\ \hline
$a_2$ & $b_1$ & 4 & 3 \\ \hline
$a_2$ & $b_1$ & 5 & 4 \\ \hline
$a_3$ & $b_1$ & 8 & 5 \\ \hline
$a_3$ & $b_1$ & 9 & 6 \\ \hline
$a_1$ & $b_2$ & 3 & 7 \\ \hline
$a_3$ & $b_2$ & 10 & 8 \\ \hline
$a_2$ & $b_3$ & 6 & 9 \\ \hline
$a_2$ & $b_3$ & 7 & 10 \\ \hline
\end{tabular}
\multirow{3}{1em}{
\begin{tikzpicture}
\coordinate  (c0) at(0,0) ;
\coordinate  (c1) at(0.5,0) ;
\coordinate  (c2) at(0.5,-11.5) ;
\coordinate  (c3) at(0,-11.5) ;
\draw[-] (c0) -- (c1);
\draw[-] (c1) -- (c2);
\draw[->] (c2) -- (c3) ;
\end{tikzpicture}
}
\bigskip\\
\begin{tabular}{|c|c||c|c|}
\hline
${B}$ & ${C}$ & ${\rank{B}}$ & ${\rank{C}}$ \\ \hline
$b_1$ & $c_1$ & 1 & 1 \\ \hline
$b_1$ & $c_2$ & 2 & 4 \\ \hline
$b_2$ & $c_2$ & 3 & 5 \\ \hline
$b_3$ & $c_1$ & 4 & 2 \\ \hline
$b_3$ & $c_2$ & 5 & 6 \\ \hline
$b_5$ & $c_2$ & 6 & 7 \\ \hline
$b_6$ & $c_1$ & 7 & 3 \\ \hline
\end{tabular}
$\xrightarrow{\text{Line 6--10}}$
\begin{tabular}{|c|c||c|c|c|c|c|c|}
\hline
${B}$ & ${C}$ & ${\rank{B}}$ & ${\rank{C}}$ & $\!D_R\!$ & $\!D_S\!$ & $O$ & $J$\\ \hline
$b_1$ & $c_1$ & 1 & 1 & 3 & 2 & 1 & 1\\ \hline
$b_1$ & $c_2$ & 2 & 4 & 3 & 2 & 2 & 2\\ \hline
$b_2$ & $c_2$ & 3 & 5 & 2 & 1 & 3 & 1 \\ \hline
$b_3$ & $c_1$ & 4 & 2 & 1 & 2 & 4 & 1 \\ \hline
$b_3$ & $c_2$ & 5 & 6 & 1 & 2 & 5 & 2 \\ \hline
$b_5$ & $c_2$ & 6 & 7 & 0 & 1 & 6 & 1 \\ \hline
$b_6$ & $c_1$ & 7 & 3 & 0 & 1 & 7 & 1 \\ \hline
\end{tabular}
$\xrightarrow{\text{Line 11--12}}$
\begin{tabular}{|c|c|c|c|c|c|}
\hline
$\dots$ & $\!D_R\!$ & $\!D_S\!$ & $O$ & $I$ & $J$\\ \hline
$\dots$ & 3 & 2 & 1 & 1 & 1\\ \hline
$\dots$ & 3 & 2 & 1 & 2 & 1\\ \hline
$\dots$ & 3 & 2 & 1 & 3 & 1\\ \hline
$\dots$ & 3 & 2 & 2 & 1 & 2\\ \hline
$\dots$ & 3 & 2 & 2 & 2 & 2\\ \hline
$\dots$ & 3 & 2 & 2 & 3 & 2\\ \hline
$\dots$ & 2 & 1 & 3 & 1 & 1 \\ \hline
$\dots$ & 2 & 1 & 3 & 2 & 1 \\ \hline
$\dots$ & 1 & 2 & 4 & 1 & 1 \\ \hline
$\dots$ & 1 & 2 & 5 & 1 & 2 \\ \hline
\end{tabular}\bigskip \\
$\xrightarrow{\text{Line 13--16}}$
\begin{tabular}{|c|c||c|c|c|}
\hline
$B$ & $C$ & $\rank{B}$ & $\rank{C}$ & $G$ \\\hline
$b_1$ & $c_1$ & 1 & 1 & 1 \\ \hline
$b_1$ & $c_1$ & 2 & 2 & 3 \\ \hline
$b_1$ & $c_1$ & 3 & 3 & 5 \\ \hline
$b_1$ & $c_2$ & 4 & 5 & 2 \\ \hline
$b_1$ & $c_2$ & 5 & 6 & 4 \\ \hline
$b_1$ & $c_2$ & 6 & 7 & 6 \\ \hline
$b_2$ & $c_2$ & 7 & 8 & 7 \\ \hline
$b_2$ & $c_2$ & 8 & 9 & 8 \\ \hline
$b_3$ & $c_1$ & 9 & 4 & 9 \\ \hline
$b_3$ & $c_2$ & 10 & 10 & 10 \\ \hline
\end{tabular}
$\xrightarrow{\text{Line 17--18}}$
\begin{tabular}{|c|c||c|c|}
\hline
$B$ & $C$ & $\rank{B}$ & $\rank{C}$ \\\hline
$b_1$ & $c_1$ & 1 & 1  \\ \hline
$b_1$ & $c_2$ & 2 & 5  \\ \hline
$b_1$ & $c_1$ & 3 & 2  \\ \hline
$b_1$ & $c_2$ & 4 & 6  \\ \hline
$b_1$ & $c_1$ & 5 & 3  \\ \hline
$b_1$ & $c_2$ & 6 & 7  \\ \hline
$b_2$ & $c_2$ & 7 & 8  \\ \hline
$b_2$ & $c_2$ & 8 & 9  \\ \hline
$b_3$ & $c_1$ & 9 & 4  \\ \hline
$b_3$ & $c_2$ & 10 & 10 \\ \hline
\end{tabular}
$\xrightarrow{\text{Line 19}}$
\begin{tabular}{|c|c|c||c|c|c|}
\hline
${A}$ & $B$ & $C$ & $\rank{A}$ & $\rank{B}$ & $\rank{C}$ \\\hline
$a_1$ & $b_1$ & $c_1$ & 1 & 1 & 1  \\ \hline
$a_1$ & $b_1$ & $c_2$ & 2 & 2 & 5  \\ \hline
$a_2$ & $b_1$ & $c_1$ & 4 & 3 & 2  \\ \hline
$a_2$ & $b_1$ & $c_2$ & 5 & 4 & 6  \\ \hline
$a_3$ & $b_1$ & $c_1$ & 8 & 5 & 3  \\ \hline
$a_3$ & $b_1$ & $c_2$ & 9 & 6 & 7  \\ \hline
$a_1$ & $b_2$ & $c_2$ & 3 & 7 & 8  \\ \hline
$a_3$ & $b_2$ & $c_2$ & 10& 8 & 9  \\ \hline
$a_2$ & $b_3$ & $c_1$ & 6 & 9 & 4  \\ \hline
$a_2$ & $b_3$ & $c_2$ & 7 & 10 & 10 \\ \hline
\end{tabular}
\end{tabular}}
\caption{An example of join operator with $n=7$, $m=10$, $\bm{F_R}=\{A,B\}$, $\mathcal{E}_R=\{(A),(B)\}$, $\bm{F_S}=\{B,C\}$, and $\mathcal{E}_R=\{(B),(C)\}$}
\label{fig:join}
\end{figure*}

\paragraph{Expansion with ranks} 
In order to support multi-way joins still with linear complexity, we must recompute the rank attributes of $T$.  Observing that the only operation in the join algorithm that introduces new tuples is \texttt{expansion}, we just need to show how to recompute the ranks after \texttt{expansion}.

Let the input be $R(\bm{F}, D;\rank{\mathcal{E}})$ with size $n$, where $R.D$ is the degree for expansion, with the promise that $t.D=0$ if $t$ is dummy.  Recall that the output of expansion is a relation $T(\bm{F}; \rank{\mathcal{E}})$ of size $m$, where for each $t\in R$, $t.\bm{F}$ appears $t.D$ times in $T.\bm{F}$. After the expansion, we need to recompute each rank attribute $\rank{\bm{E}} \in \rank{\mathcal{E}}$.
A natural idea is to permute $R$ by $\bmrank{E}$, compute the \texttt{expansion} of it, and then set $T[i].\bmrank{E}$ to $i$ for all $i\in[m]$. However, this result is a permutation of the expansion of the original relation, and it is not clear how to permute it to the correct order. Moreover, it does not work if $|\mathcal{E}|\ge 2$, as we could only permute and expand in a specific order.

Our solution is to analyze how the ranks change after expansion: For any $t\in R$, the ranks of the $t.D$ repetitions of $t$ in $T$ should range from $s+1$ to $s+t.D$, where $s$ is the sum of degrees of tuples with ranks less than $t$ in $R$. Therefore, we first add an attribute $O$ to $R$ to record the initial order of the relation. Then for each $\bm{E}\in\mathcal{E}$, we permute $R$ by $\bmrank{E}$, compute the \prefix{sum} of $R.D$, and then update the rank attribute $\bmrank{E}$. After all the rank attributes are updated, we recover the relation to its original order by permuting it by $R.O$, and then we can compute the \texttt{expansion} as usual to get $T$. The ranks of $T$ in each repetition are then added by its local rank in each segment, which can also be computed by a segmented \prefix{sum}. Recall that the original \texttt{expansion} supports the case $m>D_\Sigma$, which results in $m-D_\Sigma$ dummy tuples at the end of $T$, and their ranks are not computed correctly. We update their ranks to the current orders in $T$. See Algorithm~\ref{alg:expansion_rank} and Figure~\ref{fig:expansion_rank} for details. The cost and the number of rounds are $O(n+m)$ and $O(\log (n+m))$ respectively.

\begin{algorithm}[h]
    \caption{Expansion with ranks protocol}
    \label{alg:expansion_rank}
    \KwIn{Relation $R(\bm{F}, D; \rank{\mathcal{E}})$ with public size $n$}
    \KwOut{$T(\bm{F}; \rank{\mathcal{E}})$ with public size $m$}
    Add an attribute $O$ to $R$ with $R[i].O=i$ for all $i\in[n]$\;
    \For{$\bm{E} \in \mathcal{E}$}{
        $R \gets $ \texttt{permutation} of $R$ by $R.\bmrank{E}$\;
        $(s_1,\dots,s_n)\gets$ \prefix{sum} of $R.D$\;
        $R[1].\bmrank{E}\gets 0$\;
        \ParFor{$i \gets 2$ \KwTo $n$}{
            $R[i].\bmrank{E}\gets s_{i-1}$\;
        }
    }
    $R \gets $ \texttt{permutation} of $R$ by $R.O$\;
    $T(\bm{F};\rank{\mathcal{E}}, O)\gets$ \texttt{expansion} of $R$ on $R.D$\;
    $T.P\gets$ \prefix{sum} of $(1,1,\dots,1)$ segmented by $T.O$\;
    \For{$\bm{E} \in \mathcal{E}$}{
        \ParFor{$i \gets 1$ \KwTo $m$}{
            \eIf{$T[i]=\bot$}{
                $T[i].\bmrank{E}\gets i$\;
            }{
                $T[i].\bmrank{E}\gets T[i].\bmrank{E}+T[i].P$\;
            }
        }
    }
    Remove attributes $O, P$ from $T$\;
    \KwRet{$T$}
\end{algorithm}

\begin{figure*}
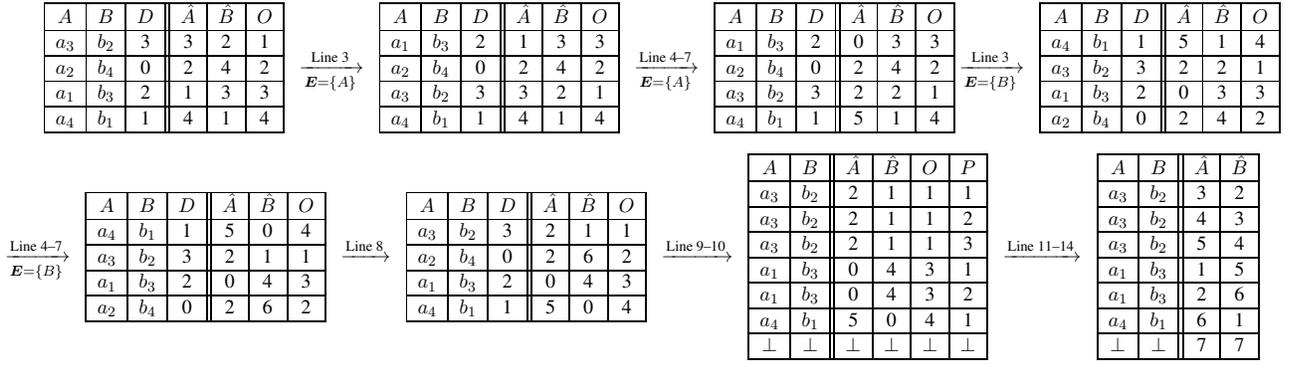

\resizebox{\textwidth}{!}{
\begin{tabular}{c}
$\phantom{\xrightarrow[\bm{E}=\{B\}]{\text{Line 4--7}}}$
\begin{tabular}{|c|c|c||c|c|c|}
\hline
${A}$ & ${B}$ & ${D}$ & ${\rank{A}}$ & ${\rank{B}}$ & $O$ \\ \hline
$a_3$ & $b_2$ & 3 & 3 & 2 & 1 \\ \hline
$a_2$ & $b_4$ & 0 & 2 & 4 & 2 \\ \hline
$a_1$ & $b_3$ & 2 & 1 & 3 & 3 \\ \hline
$a_4$ & $b_1$ & 1 & 4 & 1 & 4 \\ \hline
\end{tabular}
$~\xrightarrow[\bm{E}=\{A\}]{\text{Line 3}}~$
\begin{tabular}{|c|c|c||c|c|c|}
\hline
${A}$ & ${B}$ & ${D}$ & ${\rank{A}}$ & ${\rank{B}}$ & $O$ \\ \hline
$a_1$ & $b_3$ & 2 & 1 & 3 & 3 \\ \hline
$a_2$ & $b_4$ & 0 & 2 & 4 & 2 \\ \hline
$a_3$ & $b_2$ & 3 & 3 & 2 & 1 \\ \hline
$a_4$ & $b_1$ & 1 & 4 & 1 & 4 \\ \hline
\end{tabular}
$~\xrightarrow[\bm{E}=\{A\}]{\text{Line 4--7}}~$
\begin{tabular}{|c|c|c||c|c|c|}
\hline
${A}$ & ${B}$ & ${D}$ & ${\rank{A}}$ & ${\rank{B}}$ & $O$ \\ \hline
$a_1$ & $b_3$ & 2 & 0 & 3 & 3 \\ \hline
$a_2$ & $b_4$ & 0 & 2 & 4 & 2 \\ \hline
$a_3$ & $b_2$ & 3 & 2 & 2 & 1 \\ \hline
$a_4$ & $b_1$ & 1 & 5 & 1 & 4 \\ \hline
\end{tabular}
$\xrightarrow[\bm{E}=\{B\}]{\text{Line 3}}~$
\begin{tabular}{|c|c|c||c|c|c|}
\hline
${A}$ & ${B}$ & ${D}$ & ${\rank{A}}$ & ${\rank{B}}$ & $O$ \\ \hline
$a_4$ & $b_1$ & 1 & 5 & 1 & 4 \\ \hline
$a_3$ & $b_2$ & 3 & 2 & 2 & 1 \\ \hline
$a_1$ & $b_3$ & 2 & 0 & 3 & 3 \\ \hline
$a_2$ & $b_4$ & 0 & 2 & 4 & 2 \\ \hline
\end{tabular}\bigskip\\
$~\xrightarrow[\bm{E}=\{B\}]{\text{Line 4--7}}~$
\begin{tabular}{|c|c|c||c|c|c|}
\hline
${A}$ & ${B}$ & ${D}$ & ${\rank{A}}$ & ${\rank{B}}$ & $O$ \\ \hline
$a_4$ & $b_1$ & 1 & 5 & 0 & 4 \\ \hline
$a_3$ & $b_2$ & 3 & 2 & 1 & 1 \\ \hline
$a_1$ & $b_3$ & 2 & 0 & 4 & 3 \\ \hline
$a_2$ & $b_4$ & 0 & 2 & 6 & 2 \\ \hline
\end{tabular}
$~\xrightarrow{\text{Line 8}}~$
\begin{tabular}{|c|c|c||c|c|c|}
\hline
${A}$ & ${B}$ & ${D}$ & ${\rank{A}}$ & ${\rank{B}}$ & $O$ \\ \hline
$a_3$ & $b_2$ & 3 & 2 & 1 & 1 \\ \hline
$a_2$ & $b_4$ & 0 & 2 & 6 & 2 \\ \hline
$a_1$ & $b_3$ & 2 & 0 & 4 & 3 \\ \hline
$a_4$ & $b_1$ & 1 & 5 & 0 & 4 \\ \hline
\end{tabular}
$~\xrightarrow{\text{Line 9--10}}~$
\begin{tabular}{|c|c||c|c|c|c|}
\hline
${A}$ & ${B}$ & ${\rank{A}}$ & ${\rank{B}}$ & $O$ & $P$\\ \hline
$a_3$ & $b_2$ & 2 & 1 & 1 & 1  \\ \hline
$a_3$ & $b_2$ & 2 & 1 & 1 & 2  \\ \hline
$a_3$ & $b_2$ & 2 & 1 & 1 & 3  \\ \hline
$a_1$ & $b_3$ & 0 & 4 & 3 & 1  \\ \hline
$a_1$ & $b_3$ & 0 & 4 & 3 & 2  \\ \hline
$a_4$ & $b_1$ & 5 & 0 & 4 & 1  \\ \hline
$\perp$ & $\perp$ & $\perp$ & $\perp$ & $\perp$ & $\perp$ \\ \hline
\end{tabular}
$~\xrightarrow{\text{Line 11--14}}~$
\begin{tabular}{|c|c||c|c|c|}
\hline
${A}$ & ${B}$ & ${\rank{A}}$ & ${\rank{B}}$ \\ \hline
$a_3$ & $b_2$ & 3 & 2 \\ \hline
$a_3$ & $b_2$ & 4 & 3 \\ \hline
$a_3$ & $b_2$ & 5 & 4 \\ \hline
$a_1$ & $b_3$ & 1 & 5 \\ \hline
$a_1$ & $b_3$ & 2 & 6 \\ \hline
$a_4$ & $b_1$ & 6 & 1 \\ \hline
$\perp$ & $\perp$ & 7 & 7 \\ \hline
\end{tabular}
\end{tabular}
}
\caption{An example of Algorithm~\ref{alg:expansion_rank} with $n=4$, $m=7$, $\bm{F}=\{A,B\}$, and $\mathcal{E}=\{(A),(B)\}$}
\label{fig:expansion_rank}
\end{figure*}

\section{Free-connex Queries}
\label{sec:freeconnex}

Free-connex queries are a large class of queries made up of selection, join, projection, and group-by aggregation in a particular manner.  It is also the largest class known to be solvable in $O(n+m)$ time in plaintext, while non-free-connex queries require $O(n^w+m)$ time in the worst case, where $w>1$ is the \textit{width} of the query \cite{10.1145/2535926}.  Although there are many effective data-dependent heuristics (e.g., cost-based optimization using statistics of the data) that allow us to evaluate some non-free-connex queries faster than the worst case, they do not work for MPC, due to the security definition that the parties should learn nothing about the data. Hence, free-connex queries are indeed the best we can support under MPC on any input data, worst-case or not, unless one is willing to weaken the security definition.

SSJ \cite{secretsharedjoins} and Scape \cite{scape} only propose protocol for a two-way join operator. A natural way to compose the join operators would reveal not only the input size and output size, but also the intermediate join sizes. For example, consider a line-3 join $R_1(A, B) \Join R_2(B, C) \Join R_3(C, D)$, the naive query plan that first computes $R_{12} \gets R_1 \Join R_2$ and then computes $R_{123} \gets R_{12} \Join R_3$ would also reveals the intermediate join size $|R_{12}|$, which violates the security definition in Section~\ref{sec:model}. In this section, we show how LINQ can support free-connex queries while only revealing the input size and output size of the query. One key point is that LINQ supports $m$, the join size bound, not necessarily equal to the two-way join size, which gives the ability for intermediate results to be padded to a public size. In contrast, the two-way join protocols of SSJ and Scape neither support  this padding nor accept input relations that contain dummy tuples, so our generalization for LINQ to support free-connex queries in this section does not apply to their protocols. 

\subsection{Query Definition}
A free-connex query has the following general form:
\[ \mathcal{Q}:= \pi_{\bm{O}}^\oplus \big(\sigma_{\gamma_1} (R_1(\bm{F}_1)) \Join \cdots \Join \sigma_{\gamma_k}(R_k(\bm{F}_k)) \big), \]
Since we can easily process all the selections with $O(n)$ cost, we will ignore them in the query specification.  Also, as mentioned, projection is trivial while distinct projection is a special case of group-by aggregation, so we just focus on join and group-by aggregation below.  The group-by attributes $\bm{O}$ are also called the output attributes.

\paragraph{Full join} A full join $\mathcal{J}$ computes the joins over all input relations: $\mathcal{J}=\,\Join_{i=1}^k R_i(\bm{F_i})$. Note that the join operator $\Join$ is both commutative and associative, so the join order can be arbitrary. Let $\bm{F}=\cup_{i=1}^k \bm{F_i}$ be the set of all attributes of the relations.

\paragraph{Join tree} A join tree $\mathcal{T}$ of a full join $\mathcal{J}$ is a tree where the set of nodes are the relations $R_1,\dots,R_k$, and for any attribute $A\in \bm{F}$, the set of nodes/relations that contain $A$ are connected.  Figure \ref{fig:free_connex} shows two valid join trees. Note that not every full join has such a join tree, e.g., the triangle join $R_1(A,B)\Join R_2(B,C) \Join R_3(A,C)$. The full join that has a join tree is called an \textit{acyclic join}.  It is known that the width of any acyclic join is $1$, while the width of any cyclic join is greater than $1$, e.g., the triangle join has width $w=1.5$.

\paragraph{Annotation} We follow the same terminology from \cite{ajar, secyan}. Let $(\mathcal{S}, \oplus, \otimes)$ be a communicative semiring. Each tuple $t$ is associated with an annotation $v \in \mathcal{S}$.  If an annotation is not needed for a tuple, we set it to the $\otimes$-identity of the semiring. We conceptually add the annotations as a special attribute $V$, so the parties will store $R=R(\bm{F};V;\rank{\mathcal{E}})$ in secret-shared form.  In the implementation, this is not needed if $V$ is one of the existing attributes of $R$, or can be computed on-the-fly when $V$ is a function of the attributes, e.g., \texttt{price * quantity}. 

Next we specify how the annotations propagate through join and aggregation. 
The join $R\Join^\otimes S$ of two annotated relations $R,S$ also returns an annotated relation, where the tuple joined by $t_R\in R$ and $t_S\in S$ has the annotation $t_R.V\otimes t_S.V$.
The group-by-aggregation operator, $\pi_{\bm{O}}^\oplus$, over an annotated relation $R$, computes the $\oplus$-aggregate of each group of tuples grouped by $\bm{O}$.

\paragraph{Free-connex query} A query $\mathcal{Q}=\pi_{\bm{O}}^\oplus (\mathcal{J})$ is \textit{free-connex} if (1) $\mathcal{J}$ is acyclic, and (2) there exists a join tree $\mathcal{T}$ of $\mathcal{J}$ such that for any $A\in\bm{O}$ and $B\in\bm{F}-\bm{O}$, $\TOP(B)$ is not an ancestor of $\TOP(A)$ in $\mathcal{T}$, where $\TOP(X)$ denotes the highest node in $\mathcal{T}$ containing attribute $X$. Such a join tree is called a free-connex join tree.
In the two join trees in Figure~\ref{fig:free_connex}, the one on the left is free-connex. 
The one on the right is not because $\TOP(F)=R_3$ is the ancestor of $\TOP(A)=R_1$.  

Two special cases are of interest: When $\bm{O} = \bm{F}$, the query becomes a full join without aggregation; when $\bm{O}=\emptyset$, all join results are in one group, i.e., the query is a complete aggregation without group-by.  


\begin{example}
\label{ex:tpch3}
    TPC-H Query 3 is a typical free-connex query:
    \begin{center}
    \begin{tabular}{c}
    \begin{lstlisting}
SELECT o_orderkey, o_orderdate, o_shippriority,
       SUM(l_extendedprice * (1 - l_discount))
  FROM customer, orders, lineitem
 WHERE c_custkey = o_custkey AND l_orderkey = o_orderkey
   AND c_mktsegment = 'BUILDING'
   AND o_orderdate < date '1995-03-13'
   AND l_shipdate > date '1995-03-15'
GROUP BY o_orderkey, o_orderdate, o_shippriority;
    \end{lstlisting}
    \end{tabular}
    \end{center}
After attribute renaming, this query can be written using relational algebra as (we omit the selection operators):
\begin{align*}
&\pi^\oplus_{\mathtt{orderkey,orderdate,shippriority}} \big( \mathtt{customer(custkey,mktsegment)} \\
 &\Join \mathtt{order(orderkey,custkey,orderdate,shippriority)}  \\
& \Join \mathtt{lineitem(orderkey,shipdate)}\big) 
\end{align*} 
    We use the semiring $(\mathbb{R}^{+}, +, \times)$, and the annotations associated with \texttt{lineitem} are \texttt{l\_extendedprice*} \texttt{(1-l\_discount)}, whereas the annotations in all other relations are 1.  This query is free-connex, using the join tree that puts \texttt{orders} as the root with \texttt{lineitem} and \texttt{customer} as its two children.  Note that all three output attributes are in the root relation. \qed   
\end{example}

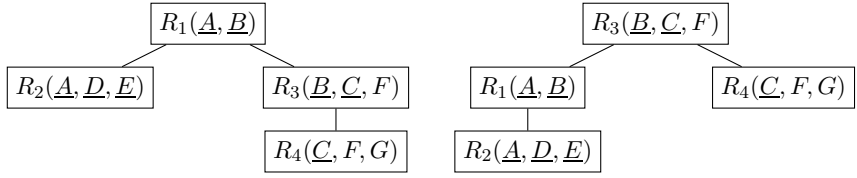
\begin{figure}
    \centering
    \begin{tikzpicture}[sibling distance=5em, level distance=3em, scale=0.85,
      every node/.style = {shape=rectangle, draw, align=center, scale=0.85}]]
    \node (root) at(0, 0) {$R_1(\underline{A}, \underline{B})$};
    \node (left) at(-2, -1) {$R_2(\underline{A}, \underline{D}, \underline{E})$};
    \node (right) at(2, -1) {$R_3(\underline{B}, \underline{C}, F)$};
    \node (rightleaf) at(2, -2) {$R_4(\underline{C}, F, G)$};

    \node (broot) at(7, 0) {$R_3(\underline{B}, \underline{C}, F)$};
    \node (bleft) at(5, -1) {$R_1(\underline{A}, \underline{B})$};
    \node (bright) at(9, -1) {$R_4(\underline{C}, F, G)$};
    \node (bleftleaf) at(5, -2) {$R_2(\underline{A}, \underline{D}, \underline{E})$};
    
    \draw[-] (root) -- (left);
    \draw[-] (root) -- (right);
    \draw[-] (right) -- (rightleaf);

    \draw[-] (broot) -- (bleft);
    \draw[-] (broot) -- (bright);
    \draw[-] (bleft) -- (bleftleaf);
\end{tikzpicture}
    \caption{Two valid join trees for the join-aggregate query $\mathcal{Q}=\pi_{\{A, B, C, D\}}^\oplus\left(\Join_{i\in[4]}^\otimes R_i\right)$ with the output attributes underlined. The left one is a free-connex join tree for $\mathcal{Q}$.}
    \label{fig:free_connex}
\end{figure}

\subsection{The Protocol}
\label{sec:yan}
Assume for simplicity that the size of each relation $|R_i|=n$ and the output size $|\mathcal{Q}|=m$ is given.
The Yannakakis algorithm \cite{yannakakis} and its aggregation version \cite{ajar,faqpaper,sizebounds} can evaluate any free-connex query in $O(n + m)$ time in plaintext. 
It consists of the following steps:

\begin{enumerate}
    \item \textit{Reduce.} The \textit{reduce} step works by visiting the join tree $\mathcal{T}$ in a bottom-up order. For each node $R(\bm{F_R}) \in \mathcal{T}$ and its parent  $P(\bm{F_P})$ and in $\mathcal{T}$, let $\bm{F} = (\bm{O} \cup \bm{F_P}) \cap \bm{F_R}$. We update $R \gets \pi_{\bm{F}}^{\oplus} (R)$. If $\bm{F} \subseteq \bm{F_P}$, we update $P \gets P \Join^{\otimes} R$, and then remove $R$ from $\mathcal{T}$; otherwise, stop going upward.
    This pass terminates when either the root is the only node in $\mathcal{T}$, or all attributes in $\mathcal{T}$ are output attributes. In the former case, the algorithm terminates, and the relation at the root node is exactly the result of $\mathcal{Q}$. In the latter case, since all attributes are in $\bm{O}$, the query degenerates to a full join and the remaining $\mathcal{T}$ is its join tree. It then goes to the next step. 
    \item \textit{Semi-joins.} Next, the algorithm uses two passes of semi-joins on the join tree to remove all the \textit{dangling tuples}, i.e., tuples that do not contribute to the full join result. Specifically, we first use a bottom-up pass to check each pair of relation $R$ and its parent $P$ and update $P \gets P \ltimes R$; then use a top-down pass to check each pair of relation $R$ and its parent $P$ and update $R \gets R \ltimes P$.  
    \item \textit{Joins.} Finally, compute the full join in a bottom-up pass: While there is more than one node in $\mathcal{T}$, pick a leaf relation $R$ and its parent $P$, update $P \gets P \Join^{\otimes} R$ with join size upper bound $m$, and remove $R$ from tree.   The process terminates the only node in $\mathcal{T}$ is the root, which is the query result. 
\end{enumerate}

Given a free-connex join tree, the Yannakakis algorithm induces a query plan that consists of join, group-by-aggregation, and semi-join operators.  To generalize our protocol for join to support a free-connex join, the key is to make sure that the rank attribute $\bmrank{E}$ that any operator relies on are available. As a group-by-aggregation or a semi-join does not introduce any new attribute to the relation, so it automatically satisfies the condition. In the \textit{reduce} step, the join operator is always in the form $P \Join^{\otimes} \pi_{\bm{F}}^{\oplus} (R)$ with $\bm{F}\subseteq \bm{F_P}$, so it is actually a semi-join and thus does not introduce any new attribute to $P$.

Then we move to step (3) where all relations only contain output attributes. Consider any join $P \Join^{\otimes} R$. We denote $P'(\bm{F_P}')$ and $R'(\bm{F_R}')$ as the two relations at the initial state of step (3). Obviously $\bm{F_P'} \subseteq \bm{F_P}$ and $\bm{F_R'} \subseteq \bm{F_R}$. We prove that both  $\bm{F_P}\cap \bm{F_R}\subseteq \bm{F_P}'$ and $\bm{F_P}\cap \bm{F_R}\subseteq \bm{F_R}'$ hold, so the rank attributes have been prepared and the join can be computed. If the former condition does not hold, then there is an attribute $A\in \bm{F_R}$ such that $A\in\bm{F_P}-\bm{F_P}'$. Since $A\notin \bm{F_P}'$, there must have been a previous step that computes the join between $P$ and its another child except $R$. Both that child and $R$ have attribute $A$, but $P$ does not. This violates the property of a join tree that all nodes containing $A$ is connected (and one can prove that this property always holds during the execution of the algorithm). We can deduce a contradiction from the latter condition in a similar way.

Below we analyze the complexity.  Every operator in the \textit{reduce} step or the \textit{semi-joins} step takes one or two relations each with size $n$ and outputs a relation with size $n$, so the costs are all $O(n)$ and the number of rounds is $O(\log n)$. In the \textit{join} step, since dangling tuples have been removed, any intermediate join size is bounded by the final join size $m$. 
In fact, we should use $m$ as the join size of each join so as to hide the intermediate join sizes.  So the cost is $O(n+m)$ and the number of rounds is $O(\log(n+m))$.

\begin{theorem}
   LINQ can evaluate any free-connex query with complexity $O(n+m)$ under the 3PC model. 
\end{theorem}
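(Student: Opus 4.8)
The plan is to establish that LINQ is a faithful, secure, and linear-cost realization of the Yannakakis algorithm restricted to free-connex queries, reusing the operator-level guarantees already proven. Since the annotated Yannakakis algorithm correctly evaluates every free-connex query in $O(n+m)$ time in plaintext, I would reduce the theorem to three obligations: (i) each relational operator invoked along the induced query plan---semi-join, group-by-aggregation, and join---is computed correctly by its LINQ protocol; (ii) whenever an operator is invoked, the rank attribute it depends on is present and valid; and (iii) the per-operator cost and security bounds compose to the claimed $O(n+m)$ cost under Definition~\ref{def:security}.

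For (i), I would invoke the operator descriptions directly: selection, group-by-aggregation, and semi-join each return a subset of their input and refresh the ranks through Algorithm~\ref{alg:update_rank}, while the join protocol (Algorithm~\ref{alg:join}) together with expansion-with-ranks (Algorithm~\ref{alg:expansion_rank}) both produces the correct (padded) join result and recomputes a consistent set of ranks. The substance of correctness lies in a rank-maintenance invariant, which I would isolate as a lemma: after any prefix of the query plan, every surviving rank attribute $\bmrank{E}$ is a permutation of $[n']$, where $n'$ is the current public size, that coincides with the consistent-sort order on $\bm{E}$ (with dummies ranked last). This holds at the leaves by the construction of Section~\ref{sec:rank} and Lemma~\ref{lem}; it is preserved by every subset-returning operator because Algorithm~\ref{alg:update_rank} permutes by the old rank and then compacts; and it is re-established after a join by the rank-recomputation argument accompanying Algorithm~\ref{alg:expansion_rank}, since expansion is the only step that creates new tuples.

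The main obstacle is obligation (ii), the \emph{availability} of ranks, which I would settle with the join-tree connectedness argument already sketched for the join step. In the reduce step every join has the form $P\Join^{\otimes}\pi_{\bm{F}}^{\oplus}(R)$ with $\bm{F}\subseteq\bm{F_P}$, hence is effectively a semi-join into $P$ and introduces no attribute requiring a fresh rank. In the join step I would prove the invariant $\bm{F_P}\cap\bm{F_R}\subseteq\bm{F_P}'$ and $\bm{F_P}\cap\bm{F_R}\subseteq\bm{F_R}'$, where $\bm{F_P}',\bm{F_R}'$ are the attribute sets at the start of the join step: if some join-key attribute were missing from $\bm{F_P}'$, it would have to have been dropped by an earlier contraction, contradicting the join-tree property that all nodes containing a given attribute stay connected. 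The delicate part is verifying that this connectedness property is itself an invariant under the edge-contractions and projections the algorithm performs, so that the ranks computed once during the initial consistent sort are never silently invalidated or discarded before they are needed.

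Finally, for (iii) I would sum the per-operator bounds. Each reduce and semi-join operator takes inputs padded to the public size $n$ and outputs a relation of public size $n$, costing $O(n)$ with $O(\log n)$ rounds; each join in the join step uses the public bound $m$ as its output size, costing $O(n+m)$ with $O(\log(n+m))$ rounds, and there are only $O(1)$ operators since the query size is treated as constant. The totals are therefore $O(n+m)$ cost and $O(\log(n+m))$ rounds. Security follows from sequential composition: every primitive satisfies Definition~\ref{def:security}, and because every intermediate relation is padded to a publicly known size ($n$ after reduce and semi-join, $m$ after each join), the adversary's transcript can be simulated from the input and output sizes alone, matching the ideal functionality.
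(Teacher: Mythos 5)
Your proposal is correct and follows essentially the same route as the paper: reduce to the Yannakakis query plan, argue rank availability via the join-tree connectedness property (reduce-step joins are semi-joins; the join-step containments $\bm{F_P}\cap\bm{F_R}\subseteq\bm{F_P}'$ and $\bm{F_P}\cap\bm{F_R}\subseteq\bm{F_R}'$), and sum per-operator costs with intermediate joins padded to $m$. Your explicit rank-maintenance invariant and security-composition remarks are elaborations of points the paper leaves implicit in its operator descriptions, not a different argument.
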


\begin{example}\label{ex:yan}
Consider evaluating the query in Figure~\ref{fig:free_connex} using the join tree on the left. The query plan induced by this join tree is as follows:
\begin{enumerate}
\item \textit{Reduce.} $R_4(C,F)\gets \pi_{C,F}^\oplus(R_4)$; $R_3(B,C,F)\gets R_3\Join^\otimes R_4$; $R_3(B,C)\gets \pi_{B,C}^\oplus (R_3)$. Figure~\ref{fig:yan} shows how the join tree changes during this step. 
\item \textit{Semi-joins.} $R_1\gets R_1\ltimes R_2$; $R_1\gets R_1\ltimes R_3$; $R_2\gets R_2\ltimes R_1$; $R_3\gets R_3\ltimes R_1$.
\item \textit{Joins.} $R_1(A,B,D,E)\gets R_1\Join^\otimes R_2$; $R_1(A,B,C,D,E)\gets R_1\Join^\otimes R_3$; $R_1$ is the output.
\end{enumerate}
\begin{figure}
\begin{center}
\begin{tabular}{l}
\begin{tikzpicture}[sibling distance=5em, level distance=3em, scale=0.85,
      every node/.style = {shape=rectangle, draw, align=center, scale=0.85}]]
\node (root) at(0, 0) {$R_1(\underline{A}, \underline{B})$};
\node (left) at(-2, -1) {$R_2(\underline{A}, \underline{D}, \underline{E})$};
\node (right) at(2, -1) {$R_3(\underline{B}, \underline{C}, F)$};
\node (rightleaf) at(2, -2) {$R_4(\underline{C}, F, G)$};
\draw[-] (root) -- (left);
\draw[-] (root) -- (right);
\draw[-] (rightleaf) -- (right);

\coordinate  (c0) at(3.5,-1) ;
\coordinate  (c1) at(4.5,-1) ;
\draw[->][thick] (c0) -- (c1) ;

\node (1root) at(8, 0) {$R_1(\underline{A}, \underline{B})$};
\node (1left) at(6, -1) {$R_2(\underline{A}, \underline{D}, \underline{E})$};
\node (1right) at(10, -1) {$R_3(\underline{B}, \underline{C}, F)$};
\node (1rightleaf) at(10, -2) {$R_4(\underline{C}, F)$};
\draw[-] (1root) -- (1left);
\draw[-] (1root) -- (1right);
\draw[-] (1rightleaf) -- (1right);



\coordinate  (c2) at(8,-2.5) ;
\coordinate  (c3) at(8,-3.5) ;
\draw[->][thick] (c2) -- (c3) ;

\node (2root) at(8, -4.3) {$R_1(\underline{A}, \underline{B})$};
\node (2left) at(6, -5.3) {$R_2(\underline{A}, \underline{D}, \underline{E})$};
\node (2right) at(10, -5.3) {$R_3(\underline{B}, \underline{C},F)$};
\draw[-] (2root) -- (2left);
\draw[-] (2root) -- (2right);

\coordinate  (c4) at(4.5,-4.7) ;
\coordinate  (c5) at(3.5,-4.7) ;
\draw[->][thick] (c4) -- (c5) ;

\node (3root) at(0, -4.3) {$R_1(\underline{A}, \underline{B})$};
\node (3left) at(-2, -5.3) {$R_2(\underline{A}, \underline{D}, \underline{E})$};
\node (3right) at(2, -5.3) {$R_3(\underline{B}, \underline{C})$};
\draw[-] (3root) -- (3left);
\draw[-] (3root) -- (3right);

\end{tikzpicture}
\end{tabular}
\end{center}
\caption{Examples of the \textit{reduce} step of the Yannakakis algorithm}
\label{fig:yan}
\end{figure}
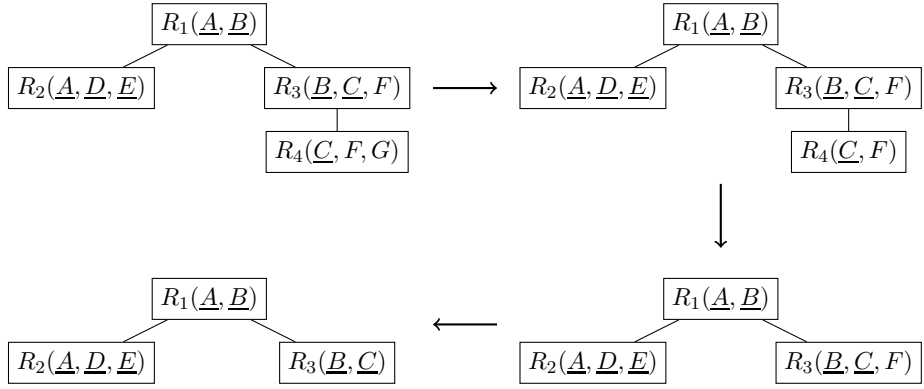
\end{example}

\subsection{Compute Output Size}\label{sec:compute_out}
We have so far assumed that our protocol is given the output size $m$. In this section, we show how to compute the value of $m$ with $O(n)$ cost and $O(\log n)$ rounds. 

Observe that the first two steps of the algorithms do not require $m$, which we still run as before.  After these two steps, we obtain a full acyclic join $\mathcal{J}'$ without dangling tuples, and its output size is the same as that of the original query $\mathcal{Q}$.  To compute the output size of $\mathcal{J}'$, we turn it into another free-connex query $\mathcal{Q}'=\pi_{\emptyset}^+(\mathcal{J}')$ where the annotations of all tuples are set to $1$, and we use the semiring $(\mathbb{Z},+,\times)$, i.e., the ring of integers.  Since $\mathcal{Q}'$ has no output attributes, on this query the algorithm will terminate in step (1), which returns $m=|\mathcal{Q}|$.  Then we can continue with step (3) on $\mathcal{Q}$.

\begin{example}
Consider the free-connex query in Example~\ref{ex:yan}. After the first 2 steps, we have $\mathcal{J}'=R_1(A,B)\Join^\otimes R_2(A,D,E)\Join^\otimes R_3(B,C)$, and its join tree is the last tree in Figure~\ref{fig:yan}. Then we run just the reduce step on $\mathcal{Q}'$:  $R_2(A)\gets \pi_{A}^+(R_2)$; $R_1(A,B)\gets R_1\Join^\times R_2$; $R_3(B)\gets \pi_{B}^+ (R_3)$; $R_1(A,B)\gets R_1\Join^\times R_3$; $m\gets\pi_{\emptyset}^+(R_1)$.
\end{example}

\section{System Implementation}
Based on LINQ, we have built a system prototype for evaluating any free-connex query under the 3PC model.  
Since our protocol works based on a free-connex join tree, the parties need to select one free-connex join tree when the query is given. This selection strategy is detailed in Section~\ref{sec:query_optimize}. Then they execute the query by the protocol based on the selected free-connex join tree, which is described in Section~\ref{sec:exec}.

\subsection{Query Plan Optimization}\label{sec:query_optimize}
Given any free-connex query $\mathcal{Q}$, our theory says that the query plan induced by any join tree of $\mathcal{Q}$ has the same $O(n+m)$ asymptotic cost.  However, the hidden constant (including the number of attributes and rank attributes of each relation, the bit-length of different attributes, etc.) can vary greatly. 
Since oblivious operators are all data-independent, the cost of them can be calculated precisely given input and output sizes. In this section, we introduce a more accurate cost model to estimate the concrete performance of different query plans, by taking the number of attributes (including the rank attributes) of each relation into consideration, which allows us to choose the best plan for execution.

For any relation $R(\bm{F};\rank{\mathcal{E}})$, $|\bm{F}|$ is the number of original attributes and $|\mathcal{E}|$ is the number of rank attributes. The cost of order-by operator is linear to $|\bm{F}|+|\mathcal{E}|$, as it permutes the whole relations. The projection operator incurs no communication. The cost of the selection operator comes from two steps. The first step is to apply the predicates, which is at most linear to $|\bm{F}|$. The second step is applying Algorithm~\ref{alg:update_rank} to update the rank attributes. We note that for each $\bm{E}\in\mathcal{E}$, updating $\bmrank{E}$ actually does not require permuting the whole relation. Instead, we first compute the \texttt{permutation} of $(O,Z)$ by $\bmrank{E}$, where $O=[n]$ and $Z$ is the dummy marker. Then we compute the \texttt{compaction} of it, set the $i$-th element of $\bmrank{E}$ to $i$ for each $i\in[n]$, and permute it to the original order by computing the \texttt{permutation} of $\bmrank{E}$ by $O$. Since these permutations and the compaction are related to only constant number of attributes, the total cost of Algorithm~\ref{alg:update_rank} is linear to $|\mathcal{E}|$. Therefore, the total cost of the selection operator is at most linear to $|\bm{F}|+|\mathcal{E}|$.
The cost of group-by-aggregation operator is also at most linear to $|\bm{F}|+|\mathcal{E}|$ due to the same reason.

The above trick also works for updating the rank attributes in expansion. Specifically, for each rank attribute $\bm{E}\in\mathcal{E}$, we update $\bmrank{E}$ in Line 3--7 in Algorithm~\ref{alg:expansion_rank}. The permutation actually does not need to be performed over the whole relation $R$, but only over $O$ and $D$, so the total cost of updating the rank attributes in expansion is linear to $|\mathcal{E}|$. The total cost of Algorithm~\ref{alg:expansion_rank} is therefore at most linear to $|\bm{F}|+|\mathcal{E}|$.

Now we assume the two input relations to a (semi-)join operator are $R(\bm{F_R};\rank{\mathcal{E}}_R)$ and $S(\bm{F_S};\rank{\mathcal{E}}_S)$. The semi-join operator has cost linear to $|\bm{F_R}|+|\mathcal{E}_R|$, where $|\bm{F_R}|$ is for the extended \texttt{intersection} on the common attributes of $R$ and $S$, and $|\mathcal{E}_R|$ is for updating rank attributes by Algorithm~\ref{alg:update_rank}. For the join operator, the cost is dominated by computing the \texttt{expansion} of $R$ and $S$, so the total cost is linear to $|\bm{F_R}|+|\mathcal{E}_R|+|\bm{F_S}|+|\mathcal{E}_S|$.

In conclusion, all operators have cost at most linear to the total number of attributes (including the rank attributes) of the input relations (despite semi-join which is linear to the first relation only). We use this value multiplied by the data complexity as the cost estimation of an operator. The cost of a join tree is therefore the total costs of the operators in the query plan it induces. 

\begin{example}
    Recall the free-connex query in Example~\ref{ex:yan}. 
    Consider another free-connex join tree, in which we put $R_2$ as the root, i.e., the join tree becomes a line $R_2-R_1-R_3-R_4$. To compare the two plans, we measure their total costs in the \textit{semi-joins} step and \textit{final join} step, because they have the same query plan in the \textit{reduce} step. The details are shown in Table~\ref{tab:plan_cost}, which suggest we should choose the new free-connex join tree.

    \begin{table}
\centering
\begin{tabular}{|c|c|}
\hline
Operation & Cost \\ \hline
$R_1(A,B;\rank{A},\rank{B})\gets R_1(A,B;\rank{A},\rank{B})\ltimes R_2(A,D,E;\rank{A})$ & $4n$ \\ \hline
$R_1(A,B;\rank{A},\rank{B})\gets R_1(A,B;\rank{A},\rank{B})\ltimes R_3(B,C;\rank{B})$ & $4n$ \\ \hline
$R_2(A,D,E;\rank{A})\gets R_2(A,D,E;\rank{A}) \ltimes R_1(A,B;\rank{A},\rank{B})$ & $4n$ \\ \hline
$R_3(B,C;\rank{B})\gets R_3(B,C;\rank{B}) \ltimes R_1(A,B;\rank{A},\rank{B})$ & $3n$ \\ \hline
$R_1(A,B,D,E;\rank{B})\gets R_1(A,B;\rank{A},\rank{B})\Join^\otimes R_2(A,D,E;\rank{A})$ & $8m$ \\ \hline
$R_1(A,B,C,D,E)\gets R_1(A,B,D,E;\rank{B})\Join^\otimes R_3(B,C;\rank{B})$ & $8m$ \\ \hline
Total cost of old join tree & $15n+16m$ \\\hline
\end{tabular}
\bigskip\\
\begin{tabular}{|c|c|}
\hline
Operation & Cost \\ \hline
$R_1(A,B;\rank{A},\rank{B})\gets R_1(A,B;\rank{A},\rank{B})\ltimes R_3(B,C;\rank{B})$ & $4n$ \\ \hline
$R_2(A,D,E;\rank{A})\gets R_2(A,D,E;\rank{A}) \ltimes R_1(A,B;\rank{A},\rank{B})$ & $4n$ \\ \hline
$R_1(A,B;\rank{A},\rank{B})\gets R_1(A,B;\rank{A},\rank{B})\ltimes R_2(A,D,E;\rank{A})$ & $4n$ \\ \hline
$R_3(B,C;\rank{B})\gets R_3(B,C;\rank{B}) \ltimes R_1(A,B;\rank{A},\rank{B})$ & $3n$ \\ \hline
$R_1(A,B,C;\rank{A})\gets R_1(A,B;\rank{A},\rank{B})\Join^\otimes R_3(B,C;\rank{B})$ & $7m$ \\ \hline
$R_2(A,B,C,D,E)\gets R_2(A,D,E;\rank{A})\Join^\otimes R_1(A,B,C;\rank{A})$ & $8m$ \\ \hline
Total cost of new join tree & $15n+15m$ \\\hline
\end{tabular}
\caption{The costs of two join trees}
\label{tab:plan_cost}
\end{table}
\end{example}

So far we have assumed that all input relations have the same size $n$. Asymptotically speaking, there is no loss of generality, since we can simply add dummy tuples to smaller relations, which only increases the cost by a constant factor.  We do not really need to do this.  Our join algorithm can actually be easily modified to support joining two relations of different sizes.

\subsection{Query Execution}\label{sec:exec}
Upon receiving a  query, 
the parties enumerate all join trees and find the optimal query plan as described above.  
Our execution engine is built on top of ABY3 \cite{aby3}\footnote{\url{https://github.com/ladnir/aby3}}, which is secure against semi-honest adversaries and already provides the implementation of some three-party primitives like circuit-based computations and intersection.  For those that are not, such as prefix sum circuit, permutation, compaction and expansion, we have done our own implementation in C++.  The default bit length of each attribute is 64. Security parameters are set to $\kappa = 128$ and $\sigma = 40$.
After the query evaluation is done, the parties transmit the secret shares of the query result to the client, who can then reconstruct the final query result, or produce further computation tasks.

\section{Experiments}\label{sec:exp}

\subsection{Experiment Setup}
All experiments are measured on three servers, each equipped with a 2.3GHz Core i9 CPU and 32GB memory.  They are connected under a LAN, with 0.1ms network delay and 1Gb/s bandwidth.
We report the longest running time and the maximum communication costs (including data sent and received) among the three servers.
All results are the average over 10 runs.
The code of our system, as well as OptScape, is available at \url{https://anonymous.4open.science/r/LINQ}.

\paragraph{Baselines}
We compare LINQ with plaintext algorithm, OptScape, and SECYAN \cite{secyan}. The running time of plaintext algorithm is measured by PostgreSQL\footnote{\url{https://www.postgre.org/}}, where the communication cost refers to the total input and output size. 
OptScape is an optimized version of Scape, where we replace their $O(n \log^2 n)$ bitonic sorting network with the $O(n \log n)$ MPC sorting protocol \cite{hamada2013sort}. Note that our experimental results on $n = 2^{16}$ tuples indicate that OptScape achieves a speedup of 10x over Scape and 5x over SSJ, thus it serves as the state of the art baseline under 3PC model. The primitives of OptScape are the version that defend against semi-honest adversary, same as LINQ. Besides, since OptScape's protocol is for three-server without plaintext preprocessing on the input relations, for fairness comparison, we also do not count its cost for sorting the input relations on the join key. We also compare with SECYAN \cite{secyan} because it is state-of-the-art protocol that also supports free-connex queries. However, its security model is incomparable to LINQ's since it is under 2PC (stronger) but with an extra requirement that one party is the query receiver (weaker). 

\subsection{TPC-H Queries}
The first set of queries\footnote{The queries we tested are  in  \url{https://anonymous.4open.science/r/LINQ/queries.md}.} we tested in the experiments are taken from the TPC-H benchmark\footnote{\url{https://www.tpc.org/tpch/}}.  All joins in the TPC-H queries are PK-FK joins, so the two-way join protocol of OptScape can be directly composed without breaching security guarantee. 
The TPC-H datasets generated by its database population program\footnote{\url{https://github.com/electrum/tpch-dbgen}} with scales varing from 0.001 to 1. 
\begin{itemize}
    \item \textbf{Q3} / \textbf{Q10} / \textbf{Q18}: The three queries are taken from SECYAN \cite{secyan}, and \textbf{Q3} has been introduced in Example~\ref{ex:tpch3}.
    \item \textbf{Q11}: A complex group-by aggregation query of three relations which needs two semijoins to transmit the selection condition from \texttt{nation} to \texttt{partsupp}.
    \item \textbf{Q3F}: A full join of the involved 3 relations in TPC-H Q3.
    \item \textbf{Q5F}: A full join of the involved 6 relations in TPC-H Q5.
\end{itemize}

Each relation is owned by one of the three parties, and we tested the worst possible way to partition the relations
such that the two relations of each join are owned by different parties so that there is no local join optimization. 

The experimental results are presented in Table~\ref{tab:comp_tpch}.
We see that the running time of our system is consistently 2x--4x lower than that of OptScape, and the communication cost is 4x--8x lower.  More importantly, our protocol demonstrates linear growth rates, as predicted by the theory.  In contrast, OptScape exhibits a logarithmic factor growth, so the gap between the two will grow wider as the data size further increases.  
SECYAN also demonstrates a logarithmic factor growth in communication cost: 2x-3x larger than ours. Its time cost is similar to ours when the data size is small, but costs a lot when the data size increases due to its extra logarithm factor.
With our linear-complexity protocol, the running time of MPC query processing is now around 100x slower than plaintext, with a communication cost that is <100x that of the input and output size.  This has made MPC processing more practical than before.

\begin{table*}
\renewcommand\arraystretch{1.2}
    \resizebox{\textwidth}{!}{
	\centering
	\begin{tabular}{|c||c|c|c|c||c|c|c|c||c|c|c|c|}
            \hline
            \multicolumn{13}{|c|}{\textbf{Time (s)}} \\
		\hline
		Query & \multicolumn{4}{|c||}{\textbf{Q3}} & \multicolumn{4}{|c||}{\textbf{Q10}} & \multicolumn{4}{|c|}{\textbf{Q18}} \\
		\hline
		Scale & 0.001 & 0.01 & 0.1 & 1 & 0.001 & 0.01 & 0.1 & 1 &  0.001 & 0.01 & 0.1 & 1 \\
		\hline
		Plaintext & 0.003 & 0.02 & 0.30 & 1.78 & 0.01 &	0.18 & 1.08 & 4.02 & 0.01 & 0.09 & 0.99 & 3.05 \\
		OptScape & 1.14 & 4.33 & 31.25 & 383.50  & 1.98 & 4.39 & 32.52 & 373.36 & 2.01 & 4.43 & 31.89 & 370.90 \\
        SECYAN & 0.34 & 2.23 & 20.36 & 189.69 & 0.34 & 2.37 & 21.33 & 192.32 & 0.58 & 3.85 & 34.21 & 287.07 \\
		LINQ & 0.35 & 0.94 & 8.36 & 107.79  & 0.41 & 0.95 & 8.23 & 103.57 & 0.38 & 0.95 & 8.83 & 112.00 \\
            \hline \hline
            Query & \multicolumn{4}{|c||}{\textbf{Q11}} & \multicolumn{4}{|c||}{\textbf{Q3F}} & \multicolumn{4}{|c|}{\textbf{Q5F}} \\
		\hline
		Scale & 0.001 & 0.01 & 0.1 & 1 & 0.001 & 0.01 & 0.1 & 1 & 0.001 & 0.01 & 0.1 & 1 \\
		\hline
		Plaintext & 0.0005 & 0.004 & 0.02 & 0.28  & 0.02 & 0.22 & 1.92 & 19.2  &  0.019 & 0.21 & 1.74 & 18.19 \\
		OptScape & 1.34 & 2.11 & 6.10 & 47.94  & 2.66 & 6.49 & 49.00 & 645.40 & 5.14 & 12.20 & 86.97 & 1155.63 \\
		LINQ & 0.44 & 0.62 & 2.14 & 21.16  & 1.00 & 3.04 & 25.37 & 317.87 & 2.03 & 5.06 & 46.01 & 541.16 \\
            \hline 
            \multicolumn{13}{c}{} \\
            \hline
            \multicolumn{13}{|c|}{\textbf{Communication (MB)}} \\
		\hline
		Query & \multicolumn{4}{|c||}{\textbf{Q3}} & \multicolumn{4}{|c||}{\textbf{Q10}} & \multicolumn{4}{|c|}{\textbf{Q18}} \\
		\hline
		Scale & 0.001 & 0.01 & 0.1 & 1 & 0.001 & 0.01 & 0.1 & 1 & 0.001 & 0.01 & 0.1 & 1 \\
		\hline 
		Plaintext & 0.23 & 2.32 & 23.13 & 231.20  & 0.13 & 1.32 & 13.17 & 131.63 & 0.14 & 1.39 & 13.86 & 138.49 \\
		OptScape & 59.55 & 752.11 & 9572.57 & 113636.14  &  59.07 & 756.63 & 9584.31 & 113562.32 & 60.93 & 753.15 & 9487.35 & 114986.29 \\
        SECYAN & 15.95 & 185.50 & 2118.29 & 24004.68 & 12.33 & 142.82 & 1638.70 & 18607.46  & 23.51 & 266.82 & 2999.72 & 33510.06 \\
        LINQ & 7.70 & 76.82 & 767.77 & 7682.20  & 6.67 & 66.55 & 665.07 & 6655.33 & 7.36 & 73.44 & 734.02 & 7344.96 \\
            \hline \hline
            
		Query & \multicolumn{4}{|c||}{\textbf{Q11}} & \multicolumn{4}{|c||}{\textbf{Q3F}} & \multicolumn{4}{|c|}{\textbf{Q5F}} \\
		\hline
		Scale & 0.001 & 0.01 & 0.1 & 1 & 0.001 & 0.01 & 0.1 & 1 & 0.001 & 0.01 & 0.1 & 1 \\
		\hline 
		Plaintext & 0.02 & 0.19 & 1.89 & 19.22 & 0.41 & 4.15 & 41.46 & 414.35 & 0.44 & 4.38 & 43.76 & 437.64 \\
		OptScape & 6.97 & 90.12 & 1192.78 & 12966.69  & 78.27 & 921.01 & 10886.30 & 129971.91 &  129.83 & 1622.45 & 19006.96 & 226924.04 \\
        LINQ & 1.99 & 19.26 & 192.17 & 1921.44  &  20.88 & 208.21 & 2078.34 & 20774.26 & 34.42 & 341.93 & 3411.10 & 34092.46 \\
        \hline
	\end{tabular}
    }
	\caption{Costs of different protocols for TPC-H queries}
	\label{tab:comp_tpch}
\end{table*}

\subsection{Graph Queries}
Next, we tested some queries with non-key joins.  We used the bitcoin-alpha dataset (called `bitcoin') \cite{kumar2016edge, kumar2018rev2}, which is a directed graph storing who-trusts-whom network of people who trade using Bitcoin on a platform.
There are $n = 	24186$ edges in the graph; each tuple represents an edge, containing four attributes: source, target, rating and time. 
We test the following query with tuned parameter $k$ for selection conditions.
\begin{center}
\begin{tabular}{c}
\begin{lstlisting}[mathescape=true]
 SELECT b1.source, b1.target, b2.target, b3.target
   FROM bitcoin AS b1, bitcoin AS b2, bitcoin AS b3
  WHERE b1.target = b2.source AND b2.target = b3.source
    AND b1.rating >= $k$ 
    AND b2.rating >= $k$ 
    AND b3.rating >= $k$;
\end{lstlisting}
\end{tabular}
\end{center}

It is a self-join query on three copies of the relation, and we assume each party holds one of the copy. We tried this query with parameters $k=6,5,4,3$ which lead to different output sizes $m$.
The results are in Table~\ref{tab:comp_graph}. We do not compare with OptScape because, as mentioned in Section~\ref{sec:freeconnex}, it reveals the intermediate join size which breaches the security guarantee. We also do not compare with SECYAN because it does the join locally by one of the party. We tried to compare with Secrecy \cite{secrecy} and QCircuit \cite{wang2022query}, but neither finished the query in an hour. Therefore, we only report the results of our protocol and PostgreSQL. We also counted the cost of computing the output size of these queries in LINQ. The running time is around 1s and the communication cost is 90MB, despite the selection conditions. This is because the query that computes the output size is a free-connex query with fixed input size $n=24186$ and output size $m=1$ despite the selection conditions.

\begin{table}
	\centering
	\begin{tabular}{|c||c|c|c|c|}
            \hline
            & \multicolumn{4}{|c|}{\textbf{Parameter}} \\\hline
		$k$ & 6 & 5 & 4 & 3 \\
            $m$ & 21151 & 94920 & 234827 & 887494 \\\hline\hline
            & \multicolumn{4}{|c|}{\textbf{Time (s)}} \\\hline
		Plaintext & 0.10 & 0.27 & 0.45 & 1.15 \\
		LINQ & 5.04 & 11.84 & 25.75 & 93.64  \\\hline\hline
            & \multicolumn{4}{|c|}{\textbf{Comm (MB)}} \\\hline
		Plaintext & 2.30 & 4.56 & 8.83 & 28.74 \\
		LINQ & 356.73 & 952.30 & 2085.10 & 7369.73  \\\hline
	\end{tabular}
	\caption{Costs of graph query varying parameter $k$}
	\label{tab:comp_graph}
\end{table}

\section{Extend to the Three-Server Model}
\label{sec:3server}
The three-server model is a variant of 3PC, where the input is given in secret-shared form.  In practice, three different cloud service providers can play the role of the three servers, and each data owner just uploads its data to them in secret-shared form for query processing.  This variant has recently got some traction, since it can support a large number of data owners. 

LINQ, strictly speaking, does not work in the three-server model, since we cannot do consistent-sorting in $O(n)$ time on secret-shared data.  Nevertheless, there are two simple workarounds.  First, each owner can do a consistent-sort on its data to obtain the ranks, and then upload them to the three servers in secret-shared form.  Note that the consistent-sort does not increase the asymptotic cost for the data owner, which is still $O(n)$.  Note that, after this step, all remaining steps of LINQ work in the three-server model.  If the data owners are not willing to pay this extra cost, we can still replace the consistent-sort with secure sorting \cite{efficient3pcsorting} over secret-shared data, which incurs $O(n\log n)$ cost.  Note that this secure sorting only needs to be done once to obtain the ranks, which can be used to support any number of queries, each with $O(n + m)$ cost.  Thus, after $\Omega(\log n)$ queries, the amortized cost per query  becomes linear. 

\section{Conclusion}
In this paper, we have presented LINQ, the first linear-complexity join protocol under the three-party model of MPC.  We have then extended it to support any free-connex query and built a system prototype.  One intriguing question is whether linear complexity is also achievable with only two parties, which would further lower the deployment effort of such MPC systems.  In fact, most of the primitives LINQ uses have corresponding two-party protocols with same complexity, except \texttt{permutation}, for which the best practical two-party protocols have $O(n \log n)$ complexity \cite{secretsharedshuffle}, and whether this can be brought down to $O(n)$ is still an open problem in the security literature. Any improvement on these primitives will also bring an improvement to design LINQ under the two-party model. 

\bibliographystyle{alpha}

\newcommand{\etalchar}[1]{$^{#1}$}

\end{document}